\newtheorem{theorem}{Theorem}
\newtheorem{cor}{Corollary}
\newtheorem{prop}{Proposition}
\newtheorem{lemma}{Lemma}
\theoremstyle{definition}
\newtheorem{defin}{Definition}
\newtheorem{prob}{Problem}
\newcommand{\cX}{\mathcal{X}}
\newcommand{\RR}{\mathbb{R}}
\newcommand{\1}{\mathds{1}}
\newcommand*{\ep}{\varepsilon}
\newcommand*{\np}{\textsc{np}}
\newcommand*{\rip}{\textsc{rip}}
\newcommand*{\sat}{\textsc{sat}}
\DeclareMathOperator{\val}{val}
\DeclareMathOperator{\poly}{poly}
\DeclareMathOperator*{\argmin}{argmin}
\DeclareMathOperator{\exac}{E1}
\begin{document}

\title{Approximately certifying the restricted isometry property is hard}
\author{Jonathan Weed}
\thanks{This work was supported in part by NSF Graduate Research Fellowship DGE-1122374. The author would like to thank P.\ Rigollet for helpful discussions and the anonymous reviewers for their suggestions and corrections.
\\
Copyright~\copyright~2018 IEEE. Personal use of this material is permitted.  However, permission to use this material for any other purposes must be obtained from the IEEE by sending a request to pubs-permissions@ieee.org.}
\address{Department of Mathematics\\
Massachusetts Institute of Technology\\
77 Massachusetts Avenue\\
Cambridge, MA 02139\\
USA}
\email{jweed@mit.edu}
\date{\today}
\maketitle
\begin{abstract}
A matrix is said to possess the Restricted Isometry Property (RIP) if it acts as an approximate isometry when restricted to sparse vectors. Previous work has shown it to be \np-hard to determine whether a matrix possess this property, but only in a narrow range of parameters. In this work, we show that it is \np-hard to make this determination for \emph{any} accuracy parameter, even when we restrict ourselves to instances which are either RIP or far from being RIP. This result implies that it is \np-hard to approximate the range of parameters for which a matrix possesses the Restricted Isometry Property with accuracy better than some constant. Ours is the first work to prove such a claim without any additional assumptions.
\end{abstract}

\section{Introduction}
The field of compressed sensing, inaugurated by the seminal work of Cand\`es and Tao~\cite{CanTao05} and Donoho~\cite{Don06}, offers an attractive and powerful set of techniques for reconstructing sparse data on the basis of very few measurements.
Implementing compressed sensing techniques in practice involves taking measurements according to a matrix with special properties.
The most widely known such property is the \emph{restricted isometry property}~\cite{Can08}, which requires that the matrix act as an approximate isometry when restricted to sparse vectors.
\begin{defin}\label{def:rip}
A matrix $X \in \RR^{n \times p}$ possesses the \emph{$(k, \delta)$-restricted isometry property} if it satisfies
\begin{equation}\label{eqn:rip_def}
(1-\delta)\|u\|^2 \leq \|X u\|^2 \leq (1+\delta)\|u\|^2
\end{equation}
for all $u \in \RR^p$ with a most $k$ nonzero entries, where $\|\cdot\|$ denotes the $\ell_2$ norm on $\RR^p$.
In this case, we write $X \in \rip(k, \delta)$.
\end{defin}
We call vectors with at most $k$ nonzero entries \emph{$k$-sparse}.
A matrix possessing the $(k, \delta)$-restricted isometry property for appropriate $k$ and $\delta$ can be shown to be a good measurement matrix for compressed sensing.
For example, Cand\`es showed~\cite{Can08} that if $X \in \rip(2k, \delta)$ for $\delta < \sqrt 2 - 1$, then an $\ell_1$ minimization procedure used with the matrix $X$ exactly recovers $k$-sparse vectors.
This result has been sharpened considerably in the intervening years: Cai and Zhang showed~\cite{CaiZha13,CaiZha14} that an $\ell_1$ minimization procedure used with $X$ succeeds in recovering $k$-sparse vectors if $X \in \rip(k, \delta)$ for $\delta < 1/3$ or $X \in \rip(\lceil t k \rceil, \delta)$ for any $t \geq 4/3$ and $\delta < \sqrt{(t-1)/t}$.

Finding \rip{} matrices in the most interesting range of parameters is an object of active study.
Constructing such matrices deterministically is a hard problem~\cite{BanFic13,Dev07,BouDilFor11}, but there are several very simple random methods known to generate \rip{} matrices with high probability~\cite{BarDav08,HavReg16}.
The fact that these randomized algorithms have a small probability of failure motivates the question of certifying whether a given matrix is \rip:
\begin{prob}\label{prob:no_gap}
Given a matrix $X$, a positive integer $k$, and $\delta > 0$, is $X \in \rip(k, \delta)$?
\end{prob}
While previous work has shown Problem~\ref{prob:no_gap} to be computationally hard, these works leave open the question of whether Problem~\ref{prob:no_gap} is \np-hard for a range of parameters which are relevant to practical applications.
(See Section~\ref{sec:prior_work} for a fuller account.)
In particular, earlier work has only been able to show the \np-hardness of deciding whether $X \in \rip(k, \delta)$ for $\delta = 1 - o(1)$.
By contrast, the correct question in practice is to decide whether $X \in \rip(k, \delta)$ for some \emph{constant} $\delta$.

Earlier work has also largely focused on the exact question of being able to distinguish the two alternatives $X \in \rip(k, \delta)$ and $X \notin \rip(k, \delta)$, which in particular requires being able to make this determination even for matrices that are not in $\rip(k, \delta)$ but are nonetheless very close to being so.
In practice, this question may be unnecessarily precise, and it may suffice to know the \emph{approximate} range of parameters for which a matrix possesses the restricted isometry property and thereby circumvent the problem of testing membership in $\rip(k, \delta)$ for matrices near the boundary of the set.

If we wish to find a matrix $X \in \rip(k, \delta)$ and are willing to accept matrices for which this condition holds approximately, a modest goal would be to be able to check whether $X \in \rip(k', \delta')$ for some much weaker set of parameters $k' \ll k$ and $\delta' \gg \delta$.
If $X \notin \rip(k', \delta')$, then $X$ is ``very far'' from being in $\rip(k, \delta)$ and so can be confidently discarded.
Designing such a procedure merely requires being able to tell the difference between matrices in $\rip(k, \delta)$ and matrices that are far from being in $\rip(k, \delta)$, in the sense that they are not even in $\rip(k', \delta')$.
Since the matrices in these two classes are very different, we might hope that distinguishing between these cases is possible even if Problem~\ref{prob:no_gap} is computationally hard.

We formalize this idea by proposing the following modification of Problem~\ref{prob:no_gap}:
\begin{prob}\label{prob:modified_no_gap}
Fix constants $\delta \in (0, 1)$ and $\lambda_1, \lambda_2 > 1$.
Given a matrix $X$ and positive integer $k$, distinguish $X \in \rip(k,\delta)$ from $X \notin \rip(k/\lambda_1, \lambda_2 \delta)$.
\end{prob}
Note that we do not require our procedure to do anything in particular when presented with a matrix for which neither case holds.
Equivalently, we could assume that we are promised \emph{a priori} that the matrix under consideration is either in $\rip(k,\delta)$ or not in $\rip(k/\lambda_1, \lambda_2 \delta)$.
Problem~\ref{prob:modified_no_gap} is therefore easier than Problem~\ref{prob:no_gap}, since all that we seek is a procedure to distinguish between two very different situations.
Problem~\ref{prob:modified_no_gap} is known as a \emph{gap problem} in the computational complexity literature; more details about problems of this type appear in Section~\ref{sec:gap}.

In this work, we show that for all $\delta \in (0, 1)$ there exist $\lambda_1, \lambda_2 >1$ such that Problem~\ref{prob:modified_no_gap} is \np-hard.
This immediately establishes the \np-hardness of Problem~\ref{prob:no_gap} as well, since it implies that even if $X \in \rip(k, \delta)$, we nevertheless cannot efficiently certify that it satisfies the weaker condition $X \in \rip(k/\lambda_1, \lambda_2 \delta)$.

We also consider the following two related problems.

\begin{prob}\label{prob:approximate_k}
Given a matrix $X$ and $\delta > 0$, find the largest positive integer $k$ such that $X \in \rip(k, \delta)$.
\end{prob}
\begin{prob}\label{prob:approximate_delta}
Given a matrix $X$ and positive integer $k$, find the smallest $\delta > 0$ such that $X \in \rip(k, \delta)$.
\end{prob}

Our results show that Problems~\ref{prob:approximate_k} and~\ref{prob:approximate_delta} are
hard to solve even approximately.

\subsection{Gap hardness}\label{sec:gap}
Gap problems are part of a broader class of \emph{promise problems}, where the input is guaranteed to fall into one of two classes.
In a gap problem, these two classes are assumed to be well separated.
Concretely, given a general optimization problem of the form
\begin{equation}\label{eqn:gap_problem}
\max_{x \in \cX} f(x)\,,
\end{equation}
a threshold $t$, and a constant $c > 1$, the $c$-gap problem is to distinguish between
\begin{equation*}
\max_{x \in \cX} f(x) < t/c \text{ and } \max_{x \in \cX} f(x) \geq t\,,
\end{equation*}
where we have been promised that one of the two situations holds.
If this gap problem is \np-hard, then it is clearly \np-hard to find a $c$-multiplicative approximation to~\eqref{eqn:gap_problem}.
For this reason, gap hardness results immediately imply that the corresponding approximation problem is also hard~\cite{Vaz13}.

To establish the \np-hardness of Problem~\ref{prob:modified_no_gap}, we produce a reduction from an \np-hard gap problem and show that the reduction preserves the gap between classes. We reduce from a problem known as ``max positive 1-in-3 \sat.''

\begin{defin}\label{def:e1}
Given boolean variables $x_1, \dots, x_k$, the predicate \emph{exactly one} ($\exac$) is given by
\begin{equation*}
\exac(x_1, \dots, x_k) = \left\{\begin{array}{ll}
T & \text{ if exactly one of $x_1$, \dots, $x_k$ is true} \\
F & \text{ otherwise.}
\end{array}\right.
\end{equation*}

A \emph{max positive 1-in-3 \sat} instance is a set of $\exac$ clauses $\{c_1, \dots, c_m\}$ each containing at most $3$ variables from a set $\{x_1, \dots, x_n\}$.
The word ``positive'' is used to stress that all variables appear in positive form.
If $\exac(c_i) = T$, we say that clause $i$ is \emph{satisfied}.
In this work, we will consider instances with a further restriction.
\begin{defin}
A instance of max positive 1-in-3 \sat\ is \emph{$6$-bounded} if each variable appears in at most six clauses.
\end{defin}

Given an instance $\phi$ of $6$-bounded max positive 1-in-3 \sat\ and an assignment $\mathbf{x} \in \{T, F\}^n$, denote by $\val(\phi, \mathbf{x})$ the proportion of clauses satisfied by the assignment $\mathbf{x}$.
Finally, define
\begin{equation*}
\val(\phi) := \max_{\mathbf{x} \in \{T, F\}^n} \val(\phi, \mathbf{x})\,.
\end{equation*}
If $\val(\phi) = 1$, that is, if there is an assignment satisfying all clauses, we say that $\phi$ is \emph{satisfiable}.
\end{defin}

Our reduction is based on the following proposition.
\begin{prop}\label{prop:max_bounded_monotone_hardness}
There exists a constant $\alpha$ such that, given a $6$-bounded max positive 1-in-3 \sat\ instance $\phi$, it is \np-hard to distinguish between $\val(\phi) = 1$ and $\val(\phi) < (1-\alpha)$.
Moreover, the instances $\phi$ under consideration can be restricted to contain exactly $9/13$ as many clauses as variables.
\end{prop}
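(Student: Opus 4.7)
My plan is to derive the proposition from known constant-gap \np-hardness results for constraint satisfaction, then adjust the instance in two stages so that the maximum variable occurrence and the clause-to-variable ratio both hit the target values. The argument proceeds in three phases: first establish constant-gap hardness for (unrestricted) positive 1-in-3 \sat; then apply a degree-reduction to make it $6$-bounded; then pad to reach the exact ratio $m/n = 9/13$.

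Phase one: starting from the PCP theorem (equivalently, from H\aa stad's constant-gap hardness of \textsc{max 3-sat}), I would apply a gadget reduction to positive 1-in-3 \sat. For each 3-SAT clause the gadget introduces a constant number of fresh variables and \exac-constraints so that satisfying the original clause extends to satisfying every \exac-constraint in the gadget, while violating it forces at least one \exac-constraint to fail. Positivity is enforced by introducing, for each variable $x$, a fresh companion variable playing the role of $\neg x$, linked to $x$ by a constant-size \exac\ gadget that forces them to take opposite values in any optimal assignment. The composite reduction is gap-preserving, producing some constant $\alpha_0 > 0$ of hardness for \textsc{max positive 1-in-3 sat}.

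Phase two: to bring the maximum variable degree down to $6$, I would use the standard expander-based device. Replace each variable $x$ appearing in $d$ clauses by $d$ copies $x^{(1)}, \dots, x^{(d)}$, one per occurrence, and enforce $x^{(i)} = x^{(j)}$ along the edges of a constant-degree expander on $[d]$, each equality implemented by a constant-size \exac\ gadget. Choosing the expander degree and the gadget size small enough ensures the total degree of every variable in the resulting instance is at most $6$. Standard expander-mixing arguments show that any assignment violating a non-trivial fraction of equality constraints must disagree with the ``majority value'' of some copy-class on a constant fraction of occurrences, which in turn translates to a constant fraction of violated clauses from the original instance; thus the gap is preserved down to some new constant $\alpha_1 > 0$.

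Phase three: to hit the ratio $m/n = 9/13$ exactly, I would pad with two operations, both of which trivially preserve $6$-boundedness and the gap up to an adjustment of the constant. Adding an isolated variable increases $n$ by one and leaves $m$ fixed; adding an always-satisfiable \exac-clause on three fresh variables increases $m$ by one and $n$ by three. Since both moves decrease $m/n$, I would first arrange (by duplicating the instance or by choosing the phase-one gadget to be clause-heavy) that the starting ratio exceeds $9/13$, and then combine the two moves to land on exactly $9/13$, using that the set of reachable rationals is dense enough in the relevant interval once $n$ and $m$ are allowed to scale. Because the added clauses are always satisfied, $\val$ of the padded instance is a convex combination of $1$ and $\val$ of the original, so the gap merely shrinks by a bounded factor, giving the final constant $\alpha$.

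\textbf{Main obstacle.} The delicate point is keeping all three constraints --- constant gap, degree bound $6$, and exact ratio $9/13$ --- compatible simultaneously. The expander reduction and its \exac-equality gadget must be realized explicitly enough to verify the $6$-bound on \emph{every} variable (including the auxiliary ones), and the padding step must reach the ratio $9/13$ on the nose rather than just approach it, which is a bookkeeping exercise with divisibility conditions. Neither obstacle is conceptually deep, but together they mean the reduction cannot simply appeal to a black-box gap-preserving lemma and must instead be specified with care.
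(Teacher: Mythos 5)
Your outline is the standard ``gap CSP $\to$ expander degree reduction $\to$ padding'' route, and it is genuinely different from what the paper does. The paper avoids phases two and three entirely by choosing the starting point more carefully: it reduces from Feige's max 3\sat-5, in which every variable already appears in \emph{exactly} five clauses. The clause gadget (replace $(a\vee b\vee c)$ by $\exac(a,z_1,z_2)$, $\exac(\bar b,z_1,z_3)$, $\exac(\bar c,z_2,z_4)$ with four fresh $z$'s) together with the positivity gadget ($x_i\mapsto w_i,y_i$ plus the clause $\exac(w_i,y_i)$) then yields $6$-boundedness for free ($5$ occurrences inherited from $\psi$ plus one companion clause), and with $m=\tfrac{5n}{3}$ the counts come out to $m'=3m+n$ and $n'=2n+4m$, i.e.\ $m'/n'=9/13$ \emph{exactly}, with no expanders and no padding. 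So the regularity of the source problem is doing all the work that your phases two and three are meant to do.

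Your route can probably be pushed through, but the two steps you dismiss as bookkeeping are exactly where it is currently incomplete. First, the degree bound: with the natural positive equality gadget (aux $z$ with $\exac(x^{(i)},z)$, $\exac(x^{(j)},z)$) each copy picks up one occurrence per incident expander edge plus one occurrence in its original clause, so hitting the bound $6$ forces the expander degree to be at most $5$; you need to actually exhibit an explicit expander family of degree at most $5$ (with edge expansion bounded below) for all sizes, which is true but does not follow from the usual off-the-shelf constructions (Gabber--Galil, LPS are degree $\geq 6$), and the constant must be tracked through the majority-decoding argument. Second, the ratio step as written does not work: duplicating the instance leaves $m/n$ unchanged, so it cannot be used to push the ratio above $9/13$; and since an instance is a \emph{set} of clauses, a ratio-increasing pad must consist of distinct, simultaneously satisfiable, $6$-bounded clauses (e.g.\ the six clauses $\exac(a,b_i,b_j)$ over pairs from four fresh $b$'s, satisfied by setting $a$ true), after which your isolated-variable/fresh-clause moves and a $\gcd(9,14)=1$ Frobenius argument can land on $9/13$ exactly while diluting the gap only by a constant factor. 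Neither issue is fatal, but both need to be filled in before this constitutes a proof, whereas the paper's choice of max 3\sat-5 as the source makes them vanish.
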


A proof of Proposition~\ref{prop:max_bounded_monotone_hardness} appears in the Appendix.
It is well known that deciding whether an instance $\phi$ is \emph{satisfiable} is an \np-complete problem~\cite{Sch78}, and the gap hardness of max positive 1-in-3 \sat\ (without the $6$-boundedness condition) is proved in~\cite{KhaSud96}.
Hardness problems of this type were first officially stated in~\cite{PapYan91}, and their \np-hardness follows from the celebrated \textsc{pcp} Theorem~\cite{AroLunMot98}.

\subsection{Main result}\label{subsec:main}
We show the following gap hardness result for Problem~\ref{prob:modified_no_gap}.
\begin{theorem}\label{thm:main}
For all $\delta \in (0, 1)$ there exist constants $\lambda_1, \lambda_2 > 1$ such that, given a matrix $X$ and sparsity parameter $k$, it is \np-hard to decide whether $X \in \rip(k, \delta)$ or $X \not\in \rip(k/\lambda_1, \lambda_2 \delta)$.  Moreover, the claim holds even when restricted to matrices $X$ satisfying $\|Xu\|^2 \leq (1+\delta)\|u\|^2$ for all $u$.
\end{theorem}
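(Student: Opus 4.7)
The plan is to reduce from the gap problem of Proposition~\ref{prop:max_bounded_monotone_hardness}. Given a $6$-bounded max positive 1-in-3 \sat{} instance $\phi$ with $n$ variables and $m=9n/13$ clauses, I would first build a ``base'' matrix $Y$ whose columns are indexed by the $n$ variables of $\phi$ together with one auxiliary ``constant'' column. For each variable $j$, the column $y_j$ is (essentially) a normalized indicator of the clauses containing $x_j$, while the auxiliary column $y_0$ is proportional to $\mathbf{1}_m$. The construction is arranged so that, for any satisfying assignment with set of true variables $S$, there is a scalar $\gamma$ for which the $(|S|+1)$-sparse vector $u=\binom{\mathbf{1}_S}{-\gamma}$ has $\|Yu\|^2 \ll \|u\|^2$: the contribution of $y_0$ cancels the ``exactly one true variable per clause'' contribution of $\mathbf{1}_S$, up to lower-order error coming from any residual inhomogeneity in the variable degrees (which, if needed, can be smoothed by an initial reduction to an approximately regular instance preserving the gap of Proposition~\ref{prop:max_bounded_monotone_hardness}). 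Since $\|u\|^2=|S|+\gamma^2=\Theta(n)$ while $\|Yu\|^2=O(1)$, this vector witnesses $Y\notin\rip(k,\delta')$ in the YES case for any $k\geq |S|+1$ and any constant $\delta'<1$.

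The crux of the reduction is showing that, in the NO case $\val(\phi)<1-\alpha$, every $k$-sparse $u$ satisfies $\|Yu\|^2/\|u\|^2 \geq \beta$ for some absolute constant $\beta>0$. I would argue by contrapositive via a rounding argument. Writing $u=\binom{v}{\eta}$, the smallness of $\|Yu\|^2$ forces $\sum_{j\in c_i}(\text{scaled})\,v_j \approx \eta\cdot(\text{scale})$ for most clauses $c_i$, i.e., the ``soft assignment'' $v$ imitates a 1-in-3 indicator on most clauses. A random-threshold rounding of $v$, declaring $\mathbf{x}_j=T$ when $v_j$ exceeds a uniformly chosen level in an appropriate range, should then yield a Boolean assignment whose expected satisfaction rate exceeds $1-\alpha$, contradicting $\val(\phi)<1-\alpha$. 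The 6-boundedness is essential here: it bounds the entries of the Gram matrix $Y^{\top}Y$ (each variable co-occurs with at most $2\cdot 6=12$ others in some shared clause), providing the quantitative control the rounding requires. I expect this rounding step to be the main obstacle of the proof, since converting a continuous near-kernel vector into a Boolean assignment with a \emph{quantifiable} satisfaction rate usually demands a careful level-set or LP-rounding analysis.

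Finally, to cover every $\delta\in(0,1)$ and to guarantee the global upper isometry required by the theorem, I would dilute $Y$ by mixing with the identity: set
\[
X \;=\; \frac{1}{\sqrt{1+t^2}}\begin{pmatrix} I_{n+1} \\ tY \end{pmatrix}
\]
for a parameter $t>0$ chosen in terms of $\delta$ and $\beta$. The global bound $\|Xu\|^2\leq (1+\delta)\|u\|^2$ holds whenever $t^2\|Y\|_{\mathrm{op}}^2\leq \delta(1+t^2)$, and since 6-boundedness gives $\|Y\|_{\mathrm{op}}=O(1)$, this is achieved by taking $t$ to be a small multiple of $\sqrt{\delta}$. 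Moreover, if $\|Yu\|^2 \geq \beta\|u\|^2$ then $\|Xu\|^2/\|u\|^2 \geq 1-(1-\beta)t^2/(1+t^2)$, while the YES-witness above satisfies $\|Xu\|^2/\|u\|^2 \leq 1/(1+t^2)+o(1) = 1-t^2/(1+t^2)+o(1)$. The multiplicative gap between these two deviations from $1$ is $1/(1-\beta)$, yielding constants $\lambda_1,\lambda_2>1$ (depending only on $\alpha$, $\beta$, and $\delta$) that realize the conclusion of the theorem.
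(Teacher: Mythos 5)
Your high-level plan (reduce from the gap version of $6$-bounded positive 1-in-3 \sat, encode clauses as rows, exhibit a sparse near-kernel vector in the YES case, rule out sparse near-kernel vectors in the NO case, then normalize) is the same as the paper's, but your specific matrix has two structural gaps that the paper's extra gadgetry exists precisely to close. First, the sparsity of your YES-witness $u=\binom{\1_S}{-\gamma}$ is $|S|+1$, where $S$ is the set of true variables of some satisfying assignment; this is not known to the reduction and can range from roughly $m/6$ to $m$ across instances and assignments. To guarantee the witness is $k$-sparse you must take $k\geq m+1$; but $Y$ has only $m=\tfrac{9}{13}n$ rows and $n+1$ columns, so any $m+1$ columns are linearly dependent and $Y$ has an \emph{exact} $(m+1)$-sparse kernel vector regardless of satisfiability, killing the NO case. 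The paper avoids this by doubling each variable into a pair $(w_i^+,w_i^-)$ and appending $n$ always-active coordinates $v$, so every assignment vector has sparsity exactly $2n$ out of $3n$ columns, and by giving $\tilde X$ identity blocks among its $4n+m$ rows so that it is injective.

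Second, the rounding step is not merely the hard part---as stated it is false for your $Y$. The condition ``$\sum_{j\in c_i}(\text{scaled})\,v_j\approx \eta\cdot(\text{scale})$ for most clauses'' is met exactly by fractional vectors (e.g.\ $v_j=\eta/3$ on every variable of a $3$-clause), and no threshold rounding of such a $v$ yields an assignment satisfying a $1-\alpha$ fraction of clauses in the 1-in-3 sense; nothing in your matrix penalizes fractional solutions, and bounding the Gram matrix via $6$-boundedness does not help. In the paper, Booleanization is \emph{not} extracted from the clause rows: it is forced by a separate mechanism---the normalization $\|u\|^2=2n$ together with a sparsity budget $2(1+\xi^2)n$ barely above $2n$, the heavily weighted rows $\xi^{-1}(w^++w^--v)$ and $\xi^{-1}Pv$, and the identity blocks---which via Lemmas~\ref{prop:low_variance} and~\ref{prop:mostly_single} pins most entries of $v$ near $1$ and forces exactly one of each pair $(w_i^+,w_i^-)$ to be nonzero and near $1$. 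You would need a gadget of this kind before any rounding claim can be true. A smaller point: your dilution $X=\frac{1}{\sqrt{1+t^2}}\binom{I}{tY}$ only realizes $\delta$ in $(0,1-\beta)$; reaching $\delta$ near $1$ additionally requires shrinking the matrix by a factor $\mu<1$, as in the second half of the paper's Proposition~\ref{prop:new_delta}.
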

For $\lambda_1, \lambda_2 > 1$, the condition that $X \in \rip(k/\lambda_1, \lambda_2 \delta)$ is weaker than $X \in \rip(k, \delta)$, since the bounds in Equation~\eqref{eqn:rip_def} are weaker and the sparsity condition is stronger, so that Equation~\eqref{eqn:rip_def} is required to hold for a smaller set of vectors.
Theorem~\ref{thm:main} says that even if $X$ satisfies the strong condition $X \in \rip(k, \delta)$, it is hard to even certify that it satisfies the weak condition $X \in \rip(k/\lambda_1, \lambda_2 \delta)$. The restriction to $X$ such that $\|Xu\|^2 \leq (1+\delta)\|u\|^2$ implies that Proplem~\ref{prob:modified_no_gap} is hard even when only the \emph{lower} bound of~\eqref{eqn:rip_def} is in question.
We focus on this case because the lower bound of~\eqref{eqn:rip_def} is more important in compressed sensing applications~\cite[Remark~1]{BlaCarTan11}.

Theorem~\ref{thm:main} implies the following hardness of approximation results.

\begin{cor}\label{cor:approximate_k}
There exists a constant $\lambda_1> 1$ such that it is \np-hard to solve Problem~\ref{prob:approximate_k} to within a $\lambda_1$ factor.
\end{cor}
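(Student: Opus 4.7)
The plan is to reduce the gap problem of Theorem~\ref{thm:main} to a hypothetical approximation algorithm for Problem~\ref{prob:approximate_k}, reusing the very same constant $\lambda_1$ produced by the theorem.

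First, I would fix any convenient $\delta \in (0, 1)$ (say $\delta = 1/2$) and let $\lambda_1, \lambda_2 > 1$ be the constants guaranteed by Theorem~\ref{thm:main} for that $\delta$. Writing $k^*(X, \delta)$ for the largest integer $k$ with $X \in \rip(k, \delta)$, suppose for contradiction that a polynomial-time algorithm $\mathcal{A}$ returns, on input $(X, \delta)$, an integer $\tilde k$ satisfying $k^*(X, \delta)/\lambda_1 \leq \tilde k \leq k^*(X, \delta)$.

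Given an instance $(X, k)$ of the gap problem of Theorem~\ref{thm:main}, I would run $\mathcal{A}(X, \delta)$ and compare the output $\tilde k$ to the threshold $k/\lambda_1$. The decisive observation is the monotonicity of the RIP predicate in $\delta$: since $\lambda_2 > 1$, we have $\rip(k/\lambda_1, \delta) \subseteq \rip(k/\lambda_1, \lambda_2 \delta)$, so if $X \notin \rip(k/\lambda_1, \lambda_2 \delta)$ then a fortiori $X \notin \rip(k/\lambda_1, \delta)$, whence $k^*(X, \delta) < k/\lambda_1$. Conversely, if $X \in \rip(k, \delta)$ then $k^*(X, \delta) \geq k$. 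In the former case the approximation guarantee forces $\tilde k \leq k^*(X, \delta) < k/\lambda_1$, while in the latter $\tilde k \geq k^*(X, \delta)/\lambda_1 \geq k/\lambda_1$. Thus the sign of $\tilde k - k/\lambda_1$ decides the gap problem in polynomial time, contradicting Theorem~\ref{thm:main}.

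No real obstacle arises, since Theorem~\ref{thm:main} has already baked the sparsity gap $\lambda_1$ directly into its hardness statement; the incidental perturbation of $\delta$ by the factor $\lambda_2$ plays no role beyond the trivial containment above. If one preferred the looser convention in which a $\lambda_1$-approximation merely returns a value $\tilde k$ with $k^*(X,\delta)/\lambda_1 \leq \tilde k \leq \lambda_1 k^*(X,\delta)$, the same argument goes through with $\lambda_1$ replaced by $\sqrt{\lambda_1}$, which is immaterial for the qualitative conclusion.
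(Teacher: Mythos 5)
Your proof is correct and takes essentially the same approach as the paper: both reduce from the gap problem of Theorem~\ref{thm:main} by comparing an approximate value of $k^*$ against the threshold $k/\lambda_1$. The only (immaterial) difference is that the paper invokes the oracle at accuracy $\lambda_2\delta$ and uses monotonicity in $\delta$ on the YES side ($\rip(k,\delta)\subseteq\rip(k,\lambda_2\delta)$), whereas you invoke it at $\delta$ and use the same monotonicity on the NO side.
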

\begin{proof}
Fix an arbitrary $\delta \in (0, 1)$, and let $\lambda_1$ and $\lambda_2$ be the corresponding constants appearing in the statement of Theorem~\ref{thm:main}.
Given a matrix $X$ and desired sparsity parameter $k$, let $k^*$ be the largest positive integer such that $X \in \rip(k^*, \lambda_2 \delta)$.
If $X \in \rip(k, \delta) \subseteq \rip(k, \lambda_2 \delta)$, then $k^* \geq k$.
If $X \notin \rip(k/\lambda_1, \lambda_2 \delta)$, then $k^* < k/\lambda_1$.
A procedure to find $k'$ such that $k' \in [k^*/\lambda_1, k^*]$ would therefore yield a procedure to distinguish $X \in \rip(k, \delta)$ from $X \notin \rip(k/\lambda_1, \lambda_2 \delta)$, which is \np-hard by Theorem~\ref{thm:main}.
\end{proof}

\begin{cor}\label{cor:approximate_delta}
There exists a constant $\lambda_2 > 1$ such that it is \np-hard to solve Problem~\ref{prob:approximate_delta} to within a $\lambda_2$ factor.
\end{cor}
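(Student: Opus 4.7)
The plan is to mimic the proof of Corollary~\ref{cor:approximate_k}, with the roles of the sparsity and distortion parameters interchanged. Fix an arbitrary $\delta \in (0,1)$ and let $\lambda_1, \lambda_2 > 1$ be the constants supplied by Theorem~\ref{thm:main}. Given an instance $(X, k)$ of Problem~\ref{prob:modified_no_gap}, I feed the same pair to the hypothetical $\lambda_2$-approximation algorithm for Problem~\ref{prob:approximate_delta} and let $\delta^\ast = \min\{\eta > 0 : X \in \rip(k, \eta)\}$ denote the true optimum; by hypothesis the returned value $\delta'$ satisfies $\delta^\ast \leq \delta' \leq \lambda_2 \delta^\ast$.

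The analysis rests on two monotonicity facts. If $X \in \rip(k, \delta)$ then $\delta^\ast \leq \delta$ directly from Definition~\ref{def:rip}, and hence $\delta' \leq \lambda_2 \delta^\ast \leq \lambda_2 \delta$. For the other case, every $(k/\lambda_1)$-sparse vector is in particular $k$-sparse, so $\rip(k, \eta) \subseteq \rip(k/\lambda_1, \eta)$ for every $\eta$; contrapositively, $X \notin \rip(k/\lambda_1, \lambda_2 \delta)$ forces $X \notin \rip(k, \lambda_2 \delta)$, so $\delta^\ast > \lambda_2 \delta$ and therefore $\delta' \geq \delta^\ast > \lambda_2 \delta$. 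Comparing the returned value $\delta'$ to the known threshold $\lambda_2 \delta$ thus distinguishes the two promise cases, which is \np-hard by Theorem~\ref{thm:main}.

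I do not foresee any substantial obstacle: this argument is essentially the dual of Corollary~\ref{cor:approximate_k}, and the only point demanding care is using the correct direction of monotonicity, namely that decreasing the sparsity budget from $k$ to $k/\lambda_1$ can only weaken RIP, so that the gap in $k$ provided by Theorem~\ref{thm:main} translates cleanly into a gap in $\delta$ for the single sparsity level considered by Problem~\ref{prob:approximate_delta}. The edge case $\delta^\ast = 0$ (when $X$ is an exact isometry on $k$-sparse vectors) is harmless, since then any multiplicative approximation forces $\delta' = 0 \leq \lambda_2 \delta$, which is consistent with the first promise branch.
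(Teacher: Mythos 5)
Your argument is correct and essentially identical to the paper's proof, the only cosmetic difference being that you query the approximation oracle at sparsity level $k$ (and invoke monotonicity in the sparsity parameter in the second promise branch), whereas the paper queries at $k/\lambda_1$ (and invokes the same monotonicity in the first branch). Both versions reduce the gap problem of Theorem~\ref{thm:main} to comparing the returned $\delta'$ against the threshold $\lambda_2\delta$, so no changes are needed.
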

\begin{proof}
The proof is identical to the proof of Corollary~\ref{cor:approximate_k}.
Fix an arbitrary $\delta \in (0, 1)$, and let $\lambda_1$ and $\lambda_2$ be the corresponding constants appearing in the statement of Theorem~\ref{thm:main}.
Given a matrix $X$ and desired sparsity parameter $k$, let $\delta^*$ be the smallest positive constant such that $X \in \rip(k/\lambda_1, \delta^*)$.
If $X \in \rip(k, \delta) \subseteq \rip(k/\lambda_1, \delta)$, then $\delta^* \leq \delta$.
If $X \notin \rip(k/\lambda_1, \lambda_2 \delta)$, then $\delta^* > \lambda_2 \delta$.
A procedure to find $\delta'$ such that $\delta' \in [\delta^*, \lambda_2 \delta^*]$ would yield a procedure to distinguish $X \in \rip(k, \delta)$ from $X \notin \rip(k/\lambda_1, \lambda_2 \delta)$, which is \np-hard by Theorem~\ref{thm:main}.
\end{proof}

\subsection{Proof strategy}\label{sec:proof_strategy}
Suppose we have an instance $\phi$ of $6$-bounded positive 1-in-3 \sat\ with $n$ variables and $m$ clauses.
Given such an instance, define the $m \times n$ matrix $\Phi$:
\begin{equation}\label{eq:phi_def}
\Phi_{ij} = \left\{\begin{array}{ll}
1 & \text{if variable $j$ appears in clause $i$,} \\
0 & \text{otherwise.}\end{array}\right.
\end{equation}
Any vector $v \in \{0, 1\}^n$ can be interpreted as an assignment of true and false to $n$ variables, where $v_j = 1$ if variable $j$ is true, and $v_j = 0$ otherwise.
Given such a vector, the definition of $\Phi$ implies
\begin{align*}
(\Phi v)_i =1 & \iff \text{clause $i$ contains exactly one true variable} \\ 
&\iff \text{clause $i$ is satisfied}\,.
\end{align*}

We obtain that $\phi$ is satisfiable if and only if there exists a $0$-$1$ vector $v$ such that $\Phi v = \1$, the all-ones vector.
On the other hand, if $\val(\phi) < 1 - \alpha$, then $\|\Phi v - \1\|^2 > \alpha m$ for all $v \in \{0, 1\}^n$.
In summary,
\begin{align*}
\val(\phi) = 1 & \iff \min_{v \in \{0, 1\}^n} \|\Phi v - \1\|^2 = 0 \\
\val(\phi) < 1 - \alpha & \iff \min_{v \in \{0, 1\}^n} \|\Phi v - \1\|^2 > \alpha m\,.
\end{align*}
In other words, being able to compute
\begin{equation}\label{eqn:phi_program}
\min_{v \in \{0, 1\}^n} \|\Phi v - \1\|^2
\end{equation}
would immediately yield a procedure to check whether $\val(\phi) = 1$ or $\val(\phi) < 1 - \alpha$.
Under the assumption that computing $\val(\phi)$ is intractable, we obtain that the problem in~\eqref{eqn:phi_program} must also be hard to solve.
Moreover, the gap hardness of computing $\val(\phi)$ implies that finding an approximation to~\eqref{eqn:phi_program} is \np-hard.

Given a vector $u$, denote by $\|u\|_0$ the number of nonzero entries of $u$.
We will construct a matrix $X$ and sparsity parameter $k$ such that the value of
\begin{equation}\label{eqn:X_program}
\min_{u: \|u\| = 1, \|u\|_0 \leq k} \|X u\|
\end{equation}
is approximately the same as the value of~\eqref{eqn:phi_program}, up to an additive shift.

The matrix $X$ we construct will contain a rescaled version of $\Phi$ as a submatrix.
The remaining entries of $X$ will be chosen in such a way to ensure that the sparse vectors $u$ for which $\|X u\|^2$ is minimized are approximately $0$-$1$ vectors, and hence correspond approximately to feasible vectors $v$ in~\eqref{eqn:phi_program}.
Then, we will argue that for $0$-$1$ vectors, the values of~\eqref{eqn:phi_program} and~\eqref{eqn:X_program} are equal up to an additive shift.
By carefully controlling the errors at every step, we show knowledge of the value of~\eqref{eqn:X_program} up to some constant level of accuracy would imply the ability to solve~\eqref{eqn:phi_program}, and hence the ability to estimate $\val(\phi)$.
The hardness of the latter program then completes the proof.

\subsection{Prior work}\label{sec:prior_work}
Several papers have shown Problem~\ref{prob:no_gap} to be computationally intractable under a number of different assumptions~\cite{NatWu14,KoiZou14,BanDob13,TilPfe14,WanBerPla16}.

In~\cite{KoiZou14}, the authors analyze a problem similar to our Problem~\ref{prob:modified_no_gap}.
They obtain a variety of hardness results for the problem of distinguishing between $X \in \rip(k, \delta)$ and $X \notin \rip(k', \delta')$ under a variety of assumptions, which are plausible but nevertheless stronger than $\textsc{p} \neq \np$.
Their results suggest that distinguishing between $X \in \rip(k, \delta)$ and $X \notin \rip(k', \delta')$ is computationally hard, at least when $|\delta - \delta'|$ approaches zero as the size of the instance increases.

Another line of work has succeeded in showing that \rip\ certification is \np-hard without requiring additional assumptions.
The first two papers to prove the \np-hardness of Problem~\ref{prob:no_gap}~\cite{BanDob13,TilPfe14} both rely on the fact that given a matrix $X \in \RR^{n \times p}$ and a sparsity parameter $k$, it is \np-hard to certify whether the kernel of $X$ contains a nonzero $k$-sparse vector.
When no such vector exists, one can show
\begin{equation*}
\|X u\|^2 \geq 2^{- \poly(n, p)}\|u\|^2 \quad \text{for all $k$-sparse $u$}\,.
\end{equation*}
These reductions therefore show that certifying $X \in \rip(k, \delta)$ is hard when $\delta = 1 - \ep$ for some $\ep$ that is exponentially small in $n$ and $p$.

The above results establish that it is \np-hard to determine whether $X \in \rip(k, \delta)$ even when only the \emph{lower} bound of~\eqref{eqn:rip_def} is in question.
As noted in Section~\ref{subsec:main}, it is the lower bound that is more important for compressed sensing applications.
However,~\cite{TilPfe14} also show that certifying that a matrix $X$ satisfies the upper bound in~\eqref{eqn:rip_def} is strongly \np-hard via a separate reduction from the Clique problem.
Their reduction establishes that it is hard to distinguish between the case that
\begin{equation*}
\|Xu\|^2 = (n^2 + k - 1)\|u\|^2 \quad \text{for some $k$-sparse $u$}
\end{equation*}
and
\begin{equation*}
\|Xu\|^2 \leq (n^2 + k - 1 - o(1))\|u\|^2 \quad \text{for all $k$-sparse $u$}\,.
\end{equation*}

The work most similar to ours is~\cite{NatWu14}, in which the authors raised the same objections we do about the restrictiveness of Problem~\ref{prob:no_gap}.
They prove that Problem~\ref{prob:modified_no_gap} is hard under the \emph{small-set expansion hypothesis} (see~\cite{NatWu14} for a definition), which asserts that a particular graph problem is \np-hard to approximate.
Our work establishes this result without requiring this extra hypothesis.

One very interesting direction of recent work has de-emphasized the worst-case nature of the above results.
Instead, motivated by the fact that many random constructions are known to generate matrices possessing the restricted isometry property with high probability~\cite{BarDav08,HavReg16}, the paper~\cite{WanBerPla16} considered the important question of whether Problem~\ref{prob:no_gap} is hard \emph{on average}.
Their proof establishes that Problem~\ref{prob:no_gap} is hard for a natural random model under a hypothesis known as the \emph{planted dense subgraph assumption}.
Whether their techniques can be extended to Problems~\ref{prob:modified_no_gap}--\ref{prob:approximate_delta} is an open question.

\section{Proof of main theorem}
In this section, we prove Theorem~\ref{thm:main}, which we recall below.
\begingroup
\def\thetheorem{\ref{thm:main}}
\begin{theorem}
For all $\delta \in (0, 1)$ there exist constants $\lambda_1, \lambda_2 > 1$ such that, given a matrix $X$ and sparsity parameter $k$, it is \np-hard to decide whether $X \in \rip(k, \delta)$ or $X \not\in \rip(k/\lambda_1, \lambda_2 \delta)$. Moreover, the claim holds even when restricted to matrices $X$ satisfying $\|Xu\|^2 \leq (1+\delta)\|u\|^2$ for all $u$.
\end{theorem}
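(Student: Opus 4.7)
The plan is to produce a polynomial-time reduction from the gap version of $6$-bounded positive 1-in-3 \sat\ supplied by Proposition~\ref{prop:max_bounded_monotone_hardness}. Given an instance $\phi$ with $n$ variables and $m$ clauses, I will construct a matrix $X = X(\phi)$ and sparsity parameter $k$ so that $\val(\phi) = 1$ forces $X \notin \rip(k/\lambda_1, \lambda_2 \delta)$, while $\val(\phi) < 1 - \alpha$ forces $X \in \rip(k, \delta)$. The scaling of $X$ will be chosen so that $\|Xu\|^2 \leq (1+\delta)\|u\|^2$ for \emph{every} $u$ (not just $k$-sparse $u$), which both yields the stronger form of the theorem in the last sentence and means only the lower bound of~\eqref{eqn:rip_def} is ever in question.

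Following the outline of Section~\ref{sec:proof_strategy}, $X$ will contain a rescaled copy of $(\Phi \mid -\1)$ as a submatrix, with the extra column acting as a pivot coordinate $u_0$. For $u = (v, u_0)$, this block contributes $c_1^2 \|\Phi v - u_0 \1\|^2$ to $\|Xu\|^2$ and vanishes exactly on rescaled satisfying $0$--$1$ assignments. The remaining rows of $X$ will serve a dual purpose: they raise the baseline of $\|Xu\|^2$ up to roughly $1$ on well-structured inputs, and they penalize vectors $u$ whose nonzero entries are not of a common magnitude, thereby forcing the $k$-sparse minimizers of $\|Xu\|^2$ to be approximately rescaled $0$--$1$ vectors. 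A natural implementation of the penalty is a block of weighted difference rows of the form $e_i - e_j$, whose total contribution on a $k$-sparse unit $u$ with support $S$ is $c_2^2(k - (\1^T u_S)^2)$, minimized exactly when the nonzero coordinates of $u$ agree in sign and magnitude. The sparsity is set to $k = t + 1$, where $t$ is the number of true variables in a satisfying assignment (pinned as a function of $n$ and $m$ after any degree regularization absorbed into the reduction proving Proposition~\ref{prop:max_bounded_monotone_hardness}).

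With these ingredients the forward direction is immediate. Given a satisfying $v^\star$, the unit vector $u^\star = (v^\star, 1)/\sqrt{t+1}$ is $k$-sparse, the $\Phi$-block contributes $0$, and the remaining scaling can be calibrated so that $\|Xu^\star\|^2 = 1 - \delta'$ for some absolute constant $\delta' > \delta$ independent of $n$. Choosing any $\lambda_2 \in (1, \delta'/\delta)$ then gives $\|Xu^\star\|^2 < 1 - \lambda_2 \delta$, so $X \notin \rip(k/\lambda_1, \lambda_2 \delta)$ for any $\lambda_1 \geq 1$. For the converse I fix an arbitrary $k$-sparse unit $u$ and split into cases: if $u$ is within a threshold $\eta = \eta(\delta)$ of some rescaled $0$--$1$ vector $u_v = (v, 1)/\sqrt{t+1}$, I round to $v$, invoke the gap $\|\Phi v - \1\|^2 > \alpha m$, and conclude $\|Xu\|^2 \geq 1 - \delta' + c_1^2 \alpha m / (t+1) - (\text{rounding error}) > 1 - \delta$; otherwise the penalty block by itself already forces $\|Xu\|^2 > 1 - \delta' + \Omega(1) > 1 - \delta$. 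Both branches close with a choice of $c_1, c_2, \eta, \lambda_1, \lambda_2$ depending only on $\delta$.

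The main obstacle will be the rounding step in the converse direction, since the minimizing $u$ is only \emph{approximately} of scaled $0$--$1$ form and I need a quantitative estimate on how much $\|\Phi v - \1\|^2$ can shrink under small $\ell_2$ perturbations of $v$. The $6$-boundedness hypothesis is exactly what supplies this: every column of $\Phi$ has norm at most $\sqrt{6}$, so $v \mapsto \Phi v - \1$ is Lipschitz with a universal constant, and the rounding loss is dominated by a quantity the $\alpha m$ gap can absorb. The parameter $\lambda_1$ enters as a secondary tuning knob, providing enough slack in the sparsity to keep new low-energy vectors from entering the feasible set at the relaxed sparsity level $k/\lambda_1$; together with $\lambda_2$, it can be chosen as an absolute constant depending only on $\delta$.
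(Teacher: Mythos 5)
Your high-level architecture agrees with the paper's (same source problem via Proposition~\ref{prop:max_bounded_monotone_hardness}, same direction of reduction --- satisfiable instances map to non-\rip{} matrices --- and the same three-part design of a \sat{} block, a structure-forcing block, and a baseline), but the specific gadget you choose for the \sat{} block cannot work. You append a single pivot column so that the block contributes $c_1^2\|\Phi v - u_0\1\|^2$; the column of $X$ corresponding to $u_0$ then contains $-c_1\1 \in \RR^m$ and has squared norm at least $c_1^2 m$. Since the pivot basis vector is $1$-sparse, the conclusion $X \in \rip(k,\delta)$ that you need in the NO case already forces $c_1^2 m \leq 1+\delta$, i.e.\ $c_1^2 = O(1/m)$. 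But your converse argument needs the \sat{} gap $c_1^2 \alpha m/(t+1)$ to exceed a constant (roughly $\delta'-\delta$), which requires $c_1^2 = \Omega(1)$ because $t+1 = \Theta(m)$. These two requirements contradict each other, so the gap your reduction produces vanishes as the instance grows. The same single-pivot choice also breaks your rounding step even at the level of first-order sensitivity: the map $(v,u_0)\mapsto \Phi v - u_0\1$ is \emph{not} Lipschitz with a universal constant (the pivot column has norm $\sqrt m$, not $\sqrt 6$), and your all-pairs difference block, whose weight is capped at $c_2^2 = O(1/p)$ by the operator-norm constraint, charges only $\Theta(\eta_0^2)$ for moving $u_0$ by $\eta_0$. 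The minimizer therefore slides $u_0$ to the mean of $(\Phi v)_j$ at negligible cost, so what your construction actually measures is the variance of $\Phi v$ over clauses, and small variance (e.g.\ $\Phi \1_S \approx 2\cdot\1$ for some $k$-sparse $S$) does not certify that $\val(\phi)$ is close to $1$.

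The paper's construction is engineered precisely to avoid this: instead of one pivot it uses a length-$n$ block of reference coordinates $v$, compares clause $j$ against $v_j$ via the block $-\ep I'$ (so every column of the \sat{} block has norm $O(\ep)$), and controls the reference coordinates in aggregate through the projection block $\xi^{-1}P$ (forcing $\sum_i (v_i-\overline v)^2 = O(\xi^2 n)$) together with the consistency block $\xi^{-1}(I \mid I \mid -I)$ and the sparsity budget. Two further gaps in your plan are worth flagging. First, you set $k = t+1$ where $t$ is the number of true variables in a satisfying assignment, but $t$ is not determined by the instance of $6$-bounded positive 1-in-3 \sat{} (the $z$-variables in the gadget of Proposition~\ref{prop:max_bounded_monotone_hardness} can take different numbers of true values across satisfying assignments); the paper sidesteps this with the $w^+/w^-$ doubling, which makes every assignment vector exactly $2n$-sparse. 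Second, your claim that the constants can be calibrated directly for every $\delta \in (0,1)$ is unsubstantiated --- the baseline, the penalty strength, and the operator-norm cap all interact --- whereas the paper proves the result for one construction-determined $\delta$ and then extends to all $\delta$ by the explicit rescalings of Proposition~\ref{prop:new_delta}. You would need both repairs, plus a redesign of the reference mechanism along the paper's lines, before the rounding argument could be made to close.
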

\endgroup

We first prove the statement for a specific choice of $\delta$, and later show how the proof can be extended to all $\delta \in (0, 1)$.
We note that while the matrices constructed in the main reduction of Theorem~\ref{thm:main} have more rows than columns, we show in Section~\ref{sec:shape} how to extend the results to square matrices and matrices with more columns than rows, which are the shapes relevant for compressed sensing applications.

\subsection{Proof overview}\label{sec:overview}
The reduction is from 6-bounded positive 1-in-3 \sat.
By Proposition~\ref{prop:max_bounded_monotone_hardness}, there exists a constant $\alpha$ such that it is \np-hard to distinguish satisfiable 6-bounded positive 1-in-3 \sat\ instances from instances in which only a $1-\alpha$ fraction of clauses are satisfiable.

We show that there exist three positive constants $c_1$, $c_2$, and $c_3$ with $c_2 > 1$ such that, given $\phi$ with $n$ variables and $m$ clauses, we can construct a matrix $\tilde X \in \RR^{(4n + m) \times 3n}$ with the following three properties:
\begin{enumerate}
\item The matrix $\tilde X$ has operator norm at most $c_1$.
\item If $\val(\phi) = 1$, then there exists a vector $u$ with $\|u\|_0 = 2n$ such that
\begin{equation*}
\|\tilde Xu\|^2 = \frac 1 2 \|u\|^2\,.
\end{equation*}
\item If $\val(\phi) < 1 - \alpha$, then every vector $u$ with $\|u\|_0 \leq 2 c_2 n$ satisfies
\begin{equation*}
\|\tilde Xu\|^2 \geq \frac{1+c_3}{2}\|u\|^2\,.
\end{equation*}
\end{enumerate}

Given $\tilde X$ with the above three properties, consider the matrix $X = \frac{1}{c_1} \tilde X$.
By Property~1, $\|Xu\|^2 \leq \|u\|^2$ for all vectors $u$.
Choose $\delta = 1 - \frac{1+c_3}{2c_1^2}$ and $\lambda_2 = \frac{2c_1^2}{2c_1^2-c_3}$.
We obtain the following: if $\val(\phi) < 1-\alpha$, then for all $u$ such that $\|u\|_0 \leq 2c_2 n$,
\begin{equation*}
\|Xu\|^2 \geq \frac{1 + c_3}{2c_1^2} \|u\|^2 = (1-\delta)\|u\|^2\,.
\end{equation*}
Conversely, if $\val(\phi) = 1$ then there exists a $u$ satisfying $\|u\|_0 = 2n$ such that
\begin{equation*}
\|Xu\|^2 = \frac{1}{2c_1^2} \|u\|^2< (1-\lambda_2 \delta)\|u\|^2
\end{equation*}
In other words, letting $k = 2c_2n$ and $\lambda_1 = c_2$ yields
\begin{align*}
\val(\phi) < 1-\alpha & \implies X \in \rip(k, \delta) \\
\val(\phi) = 1 & \implies X \notin \rip(k/\lambda_1, \lambda_2 \delta)\,.
\end{align*}
Since it is \np-hard to distinguish between $\val(\phi) = 1$ and $\val(\phi) < 1- \alpha$, is is also \np-hard to distinguish between $X \in \rip(k, \delta)$ and $X \notin \rip(k/\lambda_1, \lambda_2\delta)$.

The matrix $\tilde X$ is defined in Section~\ref{sec:X_def}. We first verify Properties~1 and~2, and then in Section~\ref{sec:prop_2} reduce the verification of Property~3 to verifying two conditions on the minimizer of the program given in~\eqref{eqn:the_program}. We verify these conditions in Section~\ref{sec:verify}. Finally, we show how to extend the proof to general $\delta$ in Section~\ref{sec:extension}, and to matrices of other shapes in Section~\ref{sec:shape}.
\subsection{Definition of $\tilde X$}\label{sec:X_def}
Let $\ep$ and $\xi$ be small rational constants to be chosen later, with $\xi \ll \ep < 1$.

Given a 6-bounded 1-in-3 \sat\ instance $\phi$ with $n$ variables and $m$ clauses, define $\Phi \in \RR^{m \times n}$ as in~\eqref{eq:phi_def}.
Let $I$ be the $n \times n$ identity matrix and $\1$ the all-ones vector of length $n$, and define
\begin{equation*}
P = I - \frac{1}{n}\1 \1^\top\,.
\end{equation*}
$P$ is an orthogonal projection onto the subspace orthogonal to $\1$.
Let $\tilde X \in \RR^{(4n+m) \times 3n}$ be the following matrix:
\begin{equation}\label{eqn:x_def}
\tilde X = \left(\begin{array}{c|c|c}
I & 0 & 0 \\ \hline
0 & I & 0 \\ \hline
0 & 0 & \xi^{-1} P \\ \hline
\xi^{-1} I & \xi^{-1} I & - \xi^{-1} I \\ \hline
\ep \Phi & 0 & - \ep I'
\end{array}\right)\,,
\end{equation}
where $I'$ is the $n \times n$ identity matrix truncated to have only $m$ rows.
The entries of this matrix are rational constants with bit complexity independent of the size of $X$, so $\tilde X$ can be constructed in polynomial time given the input instance $\phi$.

Call a vector $u \in \{0, 1\}^{3n}$ an \emph{assignment vector} if
\begin{alignat*}{2}
u_i + u_{i+n} & = 1 && \quad \text{ for $1 \leq i \leq n$,} \\
u_j & = 1 && \quad \text{ for $2n < j \leq 3n$.}
\end{alignat*}
We can interpret such vectors as true-false assignments to $n$ variables by setting $x_i = T$ if $u_i = 1$ and $x_i = F$ if $u_{i+n} = 1$.
\begin{prop}\label{prop:assignment_value}
If $u$ is an assignment vector, then $\|u\|^2 = \|u\|_0 = 2n$ and
\begin{equation*}
n + \ep^2 \alpha m \leq \|\tilde Xu\|^2 \leq n + 4 \ep^2 \alpha m\,,
\end{equation*}
where $\alpha$ is the proportion of clauses in $\phi$ not satisfied by the true-false assignment corresponding to $u$. Moreover, if no clause in $\phi$ contains three true variables, then the lower bound holds with equality.
\end{prop}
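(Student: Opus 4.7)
The plan is a direct block-by-block computation, exploiting the rigid structure that an assignment vector imposes. Write $u = (u^{(1)}, u^{(2)}, u^{(3)}) \in \RR^{3n}$ in three blocks of length $n$. By definition of an assignment vector, $u^{(1)}, u^{(2)} \in \{0,1\}^n$ with $u^{(1)} + u^{(2)} = \1$, and $u^{(3)} = \1$. The norm counts come for free: exactly $n$ of the first $2n$ coordinates are $1$ (one per complementary pair), and all of the last $n$ are $1$, giving $\|u\|_0 = 2n$; since every entry is $0$ or $1$, also $\|u\|^2 = 2n$.

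Next I would apply $\tilde X$ to $u$ one row-block at a time. The first two blocks return $u^{(1)}$ and $u^{(2)}$, whose squared norms sum to $n$ (again because the supports partition $\{1,\dots,n\}$). The third block gives $\xi^{-1} P \1 = 0$, since $P$ projects onto $\1^\perp$. The fourth block gives $\xi^{-1}(u^{(1)} + u^{(2)} - u^{(3)}) = \xi^{-1}(\1 - \1) = 0$. The fifth block gives $\ep(\Phi u^{(1)} - I' \1) = \ep(\Phi u^{(1)} - \1_m)$. Summing the five contributions yields the clean identity
\begin{equation*}
\|\tilde X u\|^2 = n + \ep^2 \|\Phi u^{(1)} - \1_m\|^2.
\end{equation*}

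The final step is to interpret $\|\Phi u^{(1)} - \1_m\|^2$ combinatorially. Under the true-false assignment corresponding to $u$, the entry $(\Phi u^{(1)})_i$ counts the number of true variables in clause $i$, an integer in $\{0,1,2,3\}$ since clauses have at most three variables. By the definition of an $\exac$ clause, clause $i$ is satisfied precisely when $(\Phi u^{(1)})_i = 1$, contributing $0$ to the sum. Each unsatisfied clause contributes $(0-1)^2 = 1$ or $(2-1)^2 = 1$ or $(3-1)^2 = 4$, according to whether it has $0$, $2$, or $3$ true variables. Since an $\alpha$ fraction of the $m$ clauses are unsatisfied and each contributes between $1$ and $4$, the sum lies in $[\alpha m, 4 \alpha m]$, which combined with the display above gives the stated bounds. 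The lower bound becomes an equality exactly when no clause contains three true variables, because then every unsatisfied clause contributes precisely $1$.

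There is no real obstacle to this proof; the work done in setting up $\tilde X$ with the projection block and the agreement block $\xi^{-1}(I\;I\;-I)$ was arranged precisely so that assignment vectors kill those blocks, leaving only the identity blocks and the $\Phi$-block to handle. The only point to be careful about is the identification of $I' \1$ with $\1_m$, which follows from $I'$ being the top $m$ rows of the $n \times n$ identity (so each row has exactly one $1$); this is where the hypothesis that $m \leq n$ is implicitly used.
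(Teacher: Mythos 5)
Your proof is correct and follows essentially the same route as the paper's: the same block-by-block evaluation of $\tilde X u$ (identity blocks contribute $n$, the projection and agreement blocks vanish on assignment vectors, and the $\Phi$-block yields $\ep^2\|\Phi u^{(1)} - \1\|^2$), followed by the same combinatorial accounting of each unsatisfied clause contributing between $1$ and $4$. Your closing remark about $I'\1 = \1_m$ requiring $m \le n$ is a valid and slightly more careful observation than the paper makes explicit.
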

\begin{proof}
To evaluate $\tilde Xu$, we write
\begin{equation*}
\tilde Xu = \left(\begin{array}{c}
y_1 \\ \hline
y_2 \\ \hline
y_3 \\ \hline
y_4 \\ \hline
y_5
\end{array}\right)\,,
\end{equation*}
where $y_1, \dots, y_4 \in \RR^n$ and $y_5 \in \RR^m$.
Clearly $\|y_1\|^2 + \|y_2\|^2 = n$.
Since $P \1 = 0$ by definition, $y_3 = 0$.
Likewise, $y_4 = 0$ because $u_i + u_{i+n} = 1$ for $1 \leq i \leq n$.
Write $u^+ \in \RR^n$ for the vector consisting of the first $n$ coordinates of $u$.
By definition of $\Phi$, we have
\begin{equation*}
\|y_5\|^2 = \ep^2\|\Phi u^+ - \1\|^2\,.
\end{equation*}
If the $j$th clause of $\phi$ is satisfied by the assignment corresponding to $u$, then $(\Phi u^+)_j = 1$; otherwise, $1 \leq |(\Phi u^+)_j - 1| \leq 2$.
Therefore
\begin{equation*}
\alpha m \leq \|\Phi u^+ - \1\|^2 \leq 4 \alpha m\,.
\end{equation*}
If no clause in $\phi$ contains three true variables, then $|(\Phi u^+)_j - 1| \leq 1$ for all $j \in [m]$, so the lower bound holds with equality.
\end{proof}

With this choice of $\tilde X$, Properties 1 and 2 are easy to establish.

\begin{prop}\label{prop:property_1}
If $\val(\phi) = 1$, then there exists a $u \in \RR^{3n}$ such that $\|u\|_0 = 2n$ and
\begin{equation*}
\|\tilde Xu\|^2 = \frac 1 2\|u\|^2\,.
\end{equation*}
\end{prop}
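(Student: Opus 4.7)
The claim is essentially a direct corollary of Proposition~\ref{prop:assignment_value}, so the plan is short. Since $\val(\phi) = 1$, there is an assignment $\mathbf{x} \in \{T, F\}^n$ that satisfies every clause. First I would form the assignment vector $u \in \{0,1\}^{3n}$ associated with $\mathbf{x}$ in the sense defined just before Proposition~\ref{prop:assignment_value}: for $1 \le i \le n$ put $u_i = 1, u_{i+n} = 0$ if $x_i = T$ and the reverse if $x_i = F$, and set $u_j = 1$ for $2n < j \le 3n$.

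Next I would read off the trivial quantities. The first $2n$ coordinates contain exactly one nonzero per pair, contributing $n$ nonzeros; the last $n$ coordinates are all ones, contributing another $n$ nonzeros. Hence $\|u\|_0 = 2n$ and, since every nonzero entry equals $1$, also $\|u\|^2 = 2n$. This already verifies the sparsity claim.

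Finally I would invoke Proposition~\ref{prop:assignment_value}. The proportion $\alpha$ of clauses of $\phi$ that are \emph{not} satisfied by $\mathbf{x}$ is zero. Moreover, because $\phi$ is a positive 1-in-3 \sat\ instance, a satisfied clause contains exactly one true variable, so in particular no clause contains three true variables; this is the hypothesis under which the proposition asserts the lower bound holds with equality. Therefore
\begin{equation*}
\|\tilde X u\|^2 = n + \ep^2 \alpha m = n = \tfrac{1}{2}\|u\|^2,
\end{equation*}
as desired. There is no real obstacle here beyond checking that a satisfying 1-in-3 assignment rules out the "three true variables" exception in Proposition~\ref{prop:assignment_value}, which is automatic.
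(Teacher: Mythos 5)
Your proposal is correct and follows the paper's proof exactly: take the assignment vector of a satisfying assignment and apply Proposition~\ref{prop:assignment_value} with $\alpha = 0$. The only remark is that your appeal to the ``no clause contains three true variables'' clause is not even needed, since when $\alpha = 0$ the upper and lower bounds of Proposition~\ref{prop:assignment_value} already coincide and force $\|\tilde X u\|^2 = n$.
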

\begin{proof}
It suffices to produce a vector $u$ such that $\|u\|^2 = \|u\|_0 = 2n$ and
\begin{equation*}
\|\tilde Xu\|^2 = n\,.
\end{equation*}
Let $u$ be the assignment vector corresponding to a satisfying assignment of $\phi$.
Applying Proposition~\ref{prop:assignment_value} yields the claim.
\end{proof}

\begin{prop}\label{prop:property_3}
The matrix $\tilde X$ defined in Equation~\eqref{eqn:x_def} has operator norm at most $3\xi^{-1}$.
\end{prop}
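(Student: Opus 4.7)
The plan is to estimate the operator norm block-row by block-row. For any $u \in \RR^{3n}$, write $u = (u_1, u_2, u_3)$ with $u_j \in \RR^n$, and assume $\|u\|^2 = \|u_1\|^2 + \|u_2\|^2 + \|u_3\|^2 = 1$. Reading off the five block-rows of $\tilde X$, I obtain
\begin{equation*}
\|\tilde X u\|^2 = \|u_1\|^2 + \|u_2\|^2 + \xi^{-2}\|P u_3\|^2 + \xi^{-2}\|u_1 + u_2 - u_3\|^2 + \ep^2\|\Phi u_1 - I' u_3\|^2.
\end{equation*}

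Next I would bound each summand separately. The first two together contribute at most $1$. Since $P$ is an orthogonal projection, $\|P u_3\|^2 \leq \|u_3\|^2 \leq 1$, so the third term is at most $\xi^{-2}$. For the fourth, Cauchy--Schwarz (or simply the inequality $\|a+b+c\|^2 \leq 3(\|a\|^2 + \|b\|^2 + \|c\|^2)$) gives $\|u_1 + u_2 - u_3\|^2 \leq 3$, contributing at most $3\xi^{-2}$. The fifth term is the only one that needs a non-trivial norm bound on $\Phi$: because each clause contains at most three variables (every row of $\Phi$ has at most three ones) and the instance is $6$-bounded (every column has at most six ones), the standard row-sum/column-sum estimate yields $\|\Phi\|_{\mathrm{op}}^2 \leq 18$, so the last term is at most $2\ep^2(\|\Phi\|_{\mathrm{op}}^2 + 1) = O(1)$.

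Summing the contributions gives $\|\tilde X u\|^2 \leq 4\xi^{-2} + O(1)$, where the hidden constant depends only on $\ep$ and the $6$-boundedness of $\phi$, not on $\xi$. Since $\xi$ is chosen to be a small rational constant with $\xi \ll \ep$, this is at most $9\xi^{-2}$, and taking square roots and supremum over unit $u$ yields $\|\tilde X\|_{\mathrm{op}} \leq 3\xi^{-1}$.

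There is essentially no main obstacle here; the only point requiring care is making sure the constant hidden in the $O(1)$ term is genuinely independent of the instance $\phi$, which is ensured by the uniform bound $\|\Phi\|_{\mathrm{op}}^2 \leq 18$ coming from the row/column sparsity of $\Phi$. Everything else is routine triangle inequality and the projection property of $P$.
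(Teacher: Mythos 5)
Your proof is correct, but it takes a different route from the paper. The paper applies Schur's bound $\|\tilde X\|_{\mathrm{op}}^2 \leq \max_{i,j} r_i c_j$ (with $r_i$, $c_j$ the $\ell_1$ norms of rows and columns) to the \emph{entire} block matrix at once, checking that every row and column of $\tilde X$ has $\ell_1$ norm at most $3\xi^{-1}$. You instead expand $\|\tilde Xu\|^2$ block-row by block-row and bound each of the five contributions separately, invoking a row-sum/column-sum estimate only for the small block $\Phi$ (where $3$-variable clauses and $6$-boundedness give $\|\Phi\|_{\mathrm{op}}^2 \leq 18$). Your decomposition makes it transparent that the $\xi^{-1}$ scaling comes entirely from the $\xi^{-1}P$ block and the $(\xi^{-1}I \mid \xi^{-1}I \mid -\xi^{-1}I)$ block, yielding $\|\tilde Xu\|^2 \leq 4\xi^{-2} + O(1)$, which is in fact a sharper leading constant than $9\xi^{-2}$; the price is that you must explicitly invoke $\xi$ being a sufficiently small absolute constant to absorb the $O(1)$ into $9\xi^{-2}$ (the paper's one-line verification has an analogous implicit requirement, e.g.\ that the $\ell_1$ norms of the rows of $(\ep\Phi \mid 0 \mid -\ep I')$, which are at most $4\ep$, not exceed $3\xi^{-1}$). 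Both arguments are sound, and since the statement is only used as a crude upper bound on $c_1$, the improved constant does not matter downstream.
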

\begin{proof}
We employ the following upper bound on the size of the largest singular value due to Schur~\cite{Sch11}, which is well known (see, e.g.,~\cite{Gol13}).
If $r_i$ is the $\ell_1$ norm of the $i$th row and $c_j$ the $\ell_1$ norm of the $j$th column, then
\begin{equation*}
\|\tilde X\|^2_{\text{op}} \leq \max_{i, j} r_i c_j\,.
\end{equation*}
It is then easy to check that $r_i \leq 3 \xi^{-1}$ and $c_j \leq 3 \xi^{-1}$.
The claim follows.
\end{proof}

\subsection{Proof of Property~3}\label{sec:prop_2}
The remainder of the proof is dedicated to showing that Property~3 holds with $c_2 = 1 + \xi^2$ and $c_3$ to be specified.
For simplicity, we consider vectors $u$ satisfying $\|u\|^2 = 2n$.
In what follows, let
\begin{equation}\label{eqn:the_program}
w \in \argmin_{u: \|u\|^2 = 2n, \|u\|_0 \leq 2(1+\xi^2)n} \|\tilde Xu\|^2\,.
\end{equation}
Since $w$ and $-w$ are both minimizers, we assume without loss of generality that $w$ is such that the average value of the last $n$ entries is nonnegative.

We aim to show that, if $\val(\phi) < 1-\alpha$, then
\begin{equation}\label{eq:goal}
\|\tilde Xw\|^2 \geq (1 + c_3) n
\end{equation}
for some constant $c_3$.

Proposition~\ref{prop:assignment_value} implies that the value of $\|\tilde Xu\|$ for an assignment vector is directly related to the number of satisfied clauses in the true-false assignment corresponding to $u$.
To show~\eqref{eq:goal}, we will argue that $w$ is ``approximately'' an assignment vector, so that $\|\tilde Xw\|$ can still be controlled by $\val(\phi)$.
We interpret $w \in \RR^{3n}$ as the concatenation of three vectors $w^+$, $w^-$, and $v$ in $\RR^n$.
To show that $w$ is approximately an assignment vector, we need to show that $v$ is close to the all-ones vector, that $w^+ + w^-$ is also close to the all-ones vector, and that $w^+$ and $w^-$ have almost disjoint support.

Call variable $i$ \emph{good} if exactly one of $w^+_i$ and $w^-_i$ is zero, and the other lies in the interval $(2/3, 4/3)$.
Call clause $j$ \emph{good} if all the variables it contains are good and $v_j$ lies in the interval $(5/6, 7/6)$.
Call a clause \emph{bad} if it is not good.

\begin{prop}\label{prop:approximate_assignment_implies_bound}
Let $w$ be a minimizer in~\eqref{eqn:the_program}.
Suppose that there exist positive constants $\beta$ and $\gamma$ such that the following two properties hold:
\begin{itemize}
\item  $\|w^+\|^2 + \|w^-\|^2 \geq (1 - \beta)n$
\item There are at most $\gamma n$ bad clauses.
\end{itemize}
Let
\begin{equation*}
\rho = \frac{\ep^2}{36}\left(\frac{9}{13} \alpha - \gamma\right) - \beta \,.
\end{equation*}
If $\rho > 0$, then Property~3 holds with $c_2 = 1+\xi^2$ and $c_3 = \rho$.
\end{prop}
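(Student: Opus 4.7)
The plan is to decompose $\|\tilde X w\|^2$ into its five block-row contributions, discharge the first two blocks immediately using the first hypothesis, and then extract the quantity $\rho n$ entirely from the fifth block $\ep^2 \|\Phi w^+ - I' v\|^2$. Writing $w = (w^+, w^-, v) \in \RR^{3n}$, all five contributions are nonnegative and the first two give $\|w^+\|^2 + \|w^-\|^2 \geq (1-\beta) n$ by hypothesis. It therefore suffices to show
\[
\|\Phi w^+ - I' v\|^2 \geq \frac{1}{36}\left(\tfrac{9}{13}\alpha - \gamma\right) n\,.
\]

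To produce a usable assignment, I would round $w$ to a Boolean vector $\mathbf{x} \in \{T,F\}^n$ by setting $x_i = T$ when $w^+_i \in (2/3, 4/3)$, $x_i = F$ when $w^-_i \in (2/3, 4/3)$, and choosing arbitrarily on bad variables. By Proposition~\ref{prop:max_bounded_monotone_hardness} we may take $m = 9n/13$, and since $\val(\phi) < 1-\alpha$ every Boolean assignment (in particular $\mathbf{x}$) leaves more than $\alpha m = \tfrac{9\alpha}{13} n$ clauses unsatisfied. Subtracting the at most $\gamma n$ bad clauses from the second hypothesis, at least $\bigl(\tfrac{9}{13} \alpha - \gamma\bigr) n$ \emph{good} clauses must be unsatisfied by $\mathbf{x}$; this uses that every variable of a good clause is good, so bad variables only ever lie in bad clauses and the arbitrary choice on them cannot accidentally satisfy a good clause.

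The heart of the argument is a per-clause calculation showing that each such unsatisfied good clause $j$ contributes at least $1/36$ to $\|\Phi w^+ - I' v\|^2$. For a good clause every variable $i$ in $j$ is good, so $w^+_i \in (2/3, 4/3)$ if $x_i = T$ and $w^+_i = 0$ otherwise; writing $k$ for the number of true variables of $\mathbf{x}$ in clause $j$, the coordinate $(\Phi w^+)_j$ is a sum of $k$ numbers in $(2/3, 4/3)$ and hence lies in $(2k/3, 4k/3)$. Since $j$ is good, $v_j \in (5/6, 7/6)$, and a short case analysis on the unsatisfied cases $k \in \{0, 2, 3\}$ gives $|(\Phi w^+)_j - v_j| > 1/6$; the case $k=2$ is the tightest, producing gap $> 4/3 - 7/6 = 1/6$, which is exactly the source of the $1/36$ factor. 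Summing the squared gaps over the at least $\bigl(\tfrac{9}{13}\alpha - \gamma\bigr) n$ unsatisfied good clauses and combining with the first-two-block bound yields $\|\tilde X w\|^2 > (1+\rho) n = \tfrac{1+\rho}{2}\|w\|^2$. The inequality in Property~3 is homogeneous in $u$, so by the minimality of $w$ it extends to every $u$ with $\|u\|_0 \leq 2(1+\xi^2) n$, establishing the proposition with $c_2 = 1+\xi^2$ and $c_3 = \rho$. The only delicate step is the $k=2$ gap calculation that yields the $1/36$ factor; the remainder is routine block expansion and combinatorial counting.
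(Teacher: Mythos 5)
Your proposal is correct and follows essentially the same route as the paper: block decomposition of $\|\tilde X w\|^2$, the first two blocks handled by the hypothesis on $\|w^+\|^2+\|w^-\|^2$, the rounding of good variables to a Boolean assignment, and the per-clause gap $|(\Phi w^+)_j - v_j| > 1/6$ for unsatisfied good clauses (with $k=2$ as the tight case) summed over at least $(\tfrac{9}{13}\alpha-\gamma)n$ clauses. The only cosmetic differences are that the paper works with the sub-instance of good clauses rather than extending the assignment arbitrarily to bad variables, and that you make the normalization/homogeneity step at the end explicit.
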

\begin{proof}
As in the proof of Proposition~\ref{prop:assignment_value}, write
\begin{equation*}
\tilde Xw = \left(\begin{array}{c}
y_1 \\ \hline
y_2 \\ \hline
y_3 \\ \hline
y_4 \\ \hline
y_5
\end{array}\right)\,,
\end{equation*}
Then
\begin{equation*}
\|y_1\|^2 + \|y_2\|^2 = \|w^+\|^2 + \|w^-\|^2 \geq (1 - \beta)n\,.
\end{equation*}
Denote by $\varphi$ the $1$-in-$3$ \sat\ instance consisting only of the good clauses in $\phi$.
The vector $w$ induces a true-false assignment to the variables in $\phi$ in the following way: if the $i$th variable appears in $\varphi$, then it is good, so exactly one of $w^+_i$ and $w^-_i$ is zero.
Set this variable to true if $w^+_i \neq 0$, and false otherwise.
Any assignment to the variables of $\phi$ must fail to satisfy at least $\alpha m$ clauses, therefore this assignment to the variables of $\varphi$ must fail to satisfy at least $\alpha m - \gamma n$ clauses.

Suppose that clause $j$ appears in $\varphi$ and is not satisfied by the true-false assignment corresponding to $w$.
Then clause $j$ contains either no true variables or at least two true variables.
In the former case, $(y_5)_j = -\ep v_j < - 5\ep /6$.
In the latter, $(y_5)_j > 4 \ep/3 - \ep v_j > \ep/6$.
We obtain in either case that $(y_5)_j^2 > \ep^2/36$.
Summing over the unsatisfied clauses in $\varphi$ yields~$\|y_5\|^2 > (\ep^2/36) (\alpha m - \gamma  n)$.

We obtain
\begin{align*}
\|\tilde Xw\|^2 & \geq \|y_1\|^2 + \|y_2\|^2 + \|y_5\|^2 \\
& > (1 - \beta )n + (\ep^2/36) (\alpha m - \gamma  n) \\
& = (1+\rho) n\,.
\end{align*}

Since $w$ was a minimizer of~\eqref{eqn:the_program}, Property~3 holds with $c_2 = 1+\xi^2$ and $c_3 = \rho$, as claimed.
\end{proof}

\subsection{Verification of conditions of Proposition~\ref{prop:approximate_assignment_implies_bound}}\label{sec:verify}
In order to verify the conditions of Proposition~\ref{prop:approximate_assignment_implies_bound}, we require several lemmas about the vector $w$.
Lemma~\ref{prop:low_variance} establishes that both $v$ and $w^+ + w^-$ are close to multiples of $\1$.
\begin{lemma}\label{prop:low_variance}
Let $\overline{v} = \frac 1 n \1^\top v$.
The following bounds hold:
\begin{align*}
\sum_{i=1}^n (v_i - \overline{v})^2 & < 2 \xi^2 n\,, \\
\sum_{i=1}^n (w^+_i + w^-_i - \overline v)^2 & < 8 \xi^2 n\,, \\
\overline{v}^2 & > 1 - 3\ep^2\,.
\end{align*}
\end{lemma}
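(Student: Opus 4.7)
The plan is to extract all three bounds from a single inequality obtained by comparing $\|\tilde X w\|^2$ against the value of $\|\tilde X u\|^2$ at a very cheap feasible vector. Writing $w = (w^+, w^-, v) \in \RR^{3n}$ and expanding $\tilde X w$ block by block using \eqref{eqn:x_def}, every summand is a squared Euclidean norm:
\[
\|\tilde X w\|^2 = \|w^+\|^2 + \|w^-\|^2 + \xi^{-2}\|Pv\|^2 + \xi^{-2}\|w^+ + w^- - v\|^2 + \ep^2 \|\Phi w^+ - I'v\|^2\,.
\]
In particular, each summand, as well as any sub-sum, is a lower bound on $\|\tilde X w\|^2$.

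The next step is to produce an upper bound on $\|\tilde X w\|^2$ using the minimality of $w$ in \eqref{eqn:the_program}. The natural candidate is the ``all-false'' assignment vector $u = (0,\1,\1)$, which is feasible since $\|u\|^2 = \|u\|_0 = 2n$, and for which the block expansion collapses to $\|\tilde X u\|^2 = n + \ep^2 m$. Since Proposition~\ref{prop:max_bounded_monotone_hardness} allows us to assume $m = 9n/13$, and since $\ep<1$, we obtain the key bound
\[
\|\tilde X w\|^2 \le n + \tfrac{9}{13}\ep^2 n < (1+\ep^2) n\,.
\]

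From this single inequality the three conclusions follow by reading off different pieces of the block expansion. The first bound is the third block alone: $\|Pv\|^2 = \sum_i(v_i - \bar v)^2 \le \xi^2(1+\ep^2)n < 2\xi^2 n$. For the second bound, the fourth block gives $\|w^+ + w^- - v\|^2 \le \xi^2(1+\ep^2)n$, and combining with the triangle inequality
\[
\|w^+ + w^- - \bar v\1\| \le \|w^+ + w^- - v\| + \|v - \bar v\1\|
\]
and then squaring yields $\|w^+ + w^- - \bar v\1\|^2 \le 4\xi^2(1+\ep^2)n < 8\xi^2 n$. For the third bound, I sum the first two blocks together with the (downweighted) third: since $\xi<1$, $\|Pv\|^2 \le \xi^{-2}\|Pv\|^2$, so $\|w^+\|^2 + \|w^-\|^2 + \|Pv\|^2 \le (1+\ep^2)n$. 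Combining this with the constraint $\|v\|^2 = 2n - \|w^+\|^2 - \|w^-\|^2$ and the orthogonal decomposition $\|v\|^2 = n\bar v^2 + \|Pv\|^2$ gives $\bar v^2 = 2 - n^{-1}(\|w^+\|^2 + \|w^-\|^2 + \|Pv\|^2) \ge 1 - \ep^2 > 1 - 3\ep^2$.

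I do not foresee a real obstacle: the construction \eqref{eqn:x_def} is designed precisely so that the three quantities $\|Pv\|^2$, $\|w^+ + w^- - v\|^2$, and the mass gap $\|w^+\|^2 + \|w^-\|^2 + \|Pv\|^2$ appear as separately controllable penalty terms through the coefficient $\xi^{-2}$ together with the normalization $\|w\|^2 = 2n$. The only bookkeeping hazard is the factor of $4$ arising from squaring the triangle inequality in the second bound, which is comfortably absorbed by the constant $8$ in the statement.
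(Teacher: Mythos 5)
Your proposal is correct and follows essentially the same route as the paper: both upper-bound $\|\tilde X w\|^2$ by $(1+\ep^2)n$ via the all-false assignment vector and then read off the relevant blocks of the expansion of $\|\tilde X w\|^2$. The only differences are cosmetic (triangle inequality plus squaring in place of Young's inequality for the second bound, and folding $\|Pv\|^2$ directly into the mass balance for the third bound, which even yields the slightly stronger $\overline v^2 \geq 1-\ep^2$).
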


Lemma~\ref{prop:mostly_single} establishes that for most $i \in [n]$, exactly one of $w^+_i$ and $w_i^-$ is nonzero.
\begin{lemma}\label{prop:mostly_single}
Let
\begin{align*}
I & = \{i: w^+_i \neq 0, w^-_i \neq 0\} \\
J & = \{j: w^+_j = 0, w^-_j = 0\}\,.
\end{align*}
If $\ep^2 < 1/6$, then
\begin{equation*}
|I| + |J| < 38 \xi^2 n\,.
\end{equation*}
\end{lemma}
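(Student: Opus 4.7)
The plan is to combine the sparsity constraint $\|w\|_0 \le 2(1+\xi^2)n$ with the concentration bounds from Lemma~\ref{prop:low_variance}. The intuition is that the ``typical'' structure of $w$ has $v \approx \overline{v}\cdot \1$, $w^+ + w^- \approx \overline{v}\cdot \1$, and exactly one of $w_i^+, w_i^-$ nonzero at each $i$. Both $j \in J$ (where $w_j^+ + w_j^-$ vanishes) and indices $i$ at which $v_i = 0$ are deviations from this picture that each cost $\overline{v}^2$ in the relevant variance bound, and the hypothesis $\ep^2 < 1/6$ forces $\overline{v}^2 > 1/2$, so neither deviation can be common.

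First I would bound $|J|$: for each $j \in J$ the summand in the second inequality of Lemma~\ref{prop:low_variance} equals $\overline{v}^2$, so $|J|\,\overline{v}^2 < 8\xi^2 n$ yields $|J| < 16\xi^2 n$. The same argument applied to the first inequality shows that the number of zero coordinates of $v$ is less than $4\xi^2 n$, i.e.\ if we set $N_v := |\{i : v_i \ne 0\}|$ then $N_v > (1 - 4\xi^2)n$.

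Next I would invoke the sparsity budget. Splitting $[n]$ according to which of $w_i^+, w_i^-$ vanish, a direct count gives $\|w^+\|_0 + \|w^-\|_0 = n + |I| - |J|$, so
\begin{equation*}
n + |I| - |J| + N_v \;=\; \|w\|_0 \;\le\; 2(1+\xi^2)n.
\end{equation*}
Plugging in $N_v > (1-4\xi^2)n$ yields $|I| - |J| < 6\xi^2 n$, and combining with the previous bound on $|J|$ gives $|I| + |J| < 2|J| + 6\xi^2 n < 38\xi^2 n$, as required.

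The main delicacy is that the sparsity allowance $2(1+\xi^2)n$ leaves only $2\xi^2 n$ slack beyond the support size of a true assignment vector, so the accounting has no room to spare; this is exactly why one needs the clean lower bound $\overline{v}^2 > 1/2$ that the hypothesis $\ep^2 < 1/6$ supplies. Every other step is essentially forced by symmetric application of Lemma~\ref{prop:low_variance} to $J$ and to the zeros of $v$.
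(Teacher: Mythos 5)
Your proof is correct and follows essentially the same route as the paper's: bound $|J|$ and the number of zeros of $v$ by noting each such index contributes $\overline{v}^2 > 1/2$ to the respective variance sums of Lemma~\ref{prop:low_variance}, then use the support identity $\|w^+\|_0 + \|w^-\|_0 = n + |I| - |J|$ together with the sparsity budget $\|w\|_0 \leq 2(1+\xi^2)n$ to control $|I|$. The only difference is cosmetic bookkeeping at the end (you bound $|I| - |J|$ and add $2|J|$, the paper bounds $|I|$ directly), and both yield the same constant $38$.
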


Proofs of both lemmas appear in the Appendix.

With these lemmas in hand, we now show that the two conditions of Proposition~\ref{prop:approximate_assignment_implies_bound} are satisfied for appropriate choices of $\beta$ and $\gamma$.
We first show that $\|w^+\|^2 + \|w^-\|^2 \approx n$.
The proof is based on a simple observation: Lemma~\ref{prop:low_variance} shows that $w^+ + w^-$ and $v$ are close, and Lemma~\ref{prop:mostly_single} shows that $w^+$ and $w^-$ have almost disjoint support.
Together, these two facts imply that $2 (\|w^+\|^2 + \|w^-\|^2) \approx \|w^+\|^2 + \|w^-\|^2 + \|w^+ + w^-\|^2 \approx \|w^+\|^2 + \|w^-\|^2 + \|v\|^2 = 2n$.
\begin{prop}\label{prop:large_top}
If $\ep^2 < 1/6$, then
\begin{equation*}
\|w^+\|^2 + \|w^-\|^2 > (1- 25\xi)n\,.
\end{equation*}
\end{prop}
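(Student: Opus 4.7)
The plan is to establish $\overline{v}^2 \leq 1 + O(\xi)$, which together with the constraint $\|w\|^2 = 2n$ and the estimate $\|v\|^2 = n\overline{v}^2 + \sum_i(v_i - \overline{v})^2 \leq n\overline{v}^2 + 2\xi^2 n$ from Lemma~\ref{prop:low_variance} yields $\|w^+\|^2 + \|w^-\|^2 = 2n - \|v\|^2 \geq n - O(\xi n) \geq (1-25\xi)n$ for $\xi$ sufficiently small. The crude bound $\overline{v}^2 \leq 2$ follows from $n\overline{v}^2 \leq \|v\|^2 \leq \|w\|^2 = 2n$ by Cauchy--Schwarz.

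To establish $\overline{v}^2 \leq 1 + O(\xi)$, I would sandwich $\|w^+ + w^-\|^2$ from both sides. For the lower bound, expanding
\begin{equation*}
\|w^+ + w^-\|^2 = n\overline{v}^2 + 2\overline{v}\sum_i(w^+_i + w^-_i - \overline{v}) + \sum_i(w^+_i + w^-_i - \overline{v})^2,
\end{equation*}
applying Cauchy--Schwarz to the middle sum together with the variance bound $\sum_i(w^+_i + w^-_i - \overline{v})^2 < 8\xi^2 n$ from Lemma~\ref{prop:low_variance}, and using $|\overline{v}| \leq \sqrt{2}$, yields $\|w^+ + w^-\|^2 \geq n\overline{v}^2 - O(\xi n)$. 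For the matching upper bound, use
\begin{equation*}
\|w^+ + w^-\|^2 = \|w^+\|^2 + \|w^-\|^2 + 2\langle w^+, w^-\rangle \leq (2n - n\overline{v}^2) + 2\langle w^+, w^-\rangle,
\end{equation*}
and bound $\langle w^+, w^-\rangle$ using the nearly disjoint supports provided by Lemma~\ref{prop:mostly_single}.

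The key algebraic input for bounding $\langle w^+, w^-\rangle$ is the identity $2 w^+_i w^-_i = (w^+_i + w^-_i)^2 - ((w^+_i)^2 + (w^-_i)^2)$, which gives $2\langle w^+, w^-\rangle \leq \sum_{i \in I}(w^+_i + w^-_i)^2$, where $I$ is the overlap set in Lemma~\ref{prop:mostly_single}. Expanding further,
\begin{equation*}
\sum_{i \in I}(w^+_i + w^-_i)^2 \leq 2\sum_{i \in I}(w^+_i + w^-_i - \overline{v})^2 + 2|I|\overline{v}^2 \leq 16\xi^2 n + 152\xi^2 n = O(\xi^2 n),
\end{equation*}
using $|I| < 38\xi^2 n$ and $\overline{v}^2 \leq 2$. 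Combining the sandwich then gives $2n\overline{v}^2 \leq 2n + O(\xi n)$, and hence $\overline{v}^2 \leq 1 + O(\xi)$ as required.

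The conceptual content is exactly the informal argument from the excerpt: since $w^+ + w^-$ and $v$ are both approximately $\overline{v}\mathbf{1}$, and the supports of $w^+$ and $w^-$ are nearly disjoint, $\|w^+\|^2 + \|w^-\|^2 \approx \|w^+ + w^-\|^2 \approx \|v\|^2$, which combined with $\|w^+\|^2 + \|w^-\|^2 + \|v\|^2 = 2n$ forces each to be approximately $n$. The main obstacle is combining the two lemmas cleanly---the variance bound (which is over all indices) and the overlap bound (which is restricted to the small set $I$)---and tracking the constants through the Cauchy--Schwarz estimates so that the final $O(\xi)$ term is at most $25\xi$ for $\xi$ sufficiently small.
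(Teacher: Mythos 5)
Your proposal is correct and follows essentially the same route as the paper: both arguments combine the variance bounds of Lemma~\ref{prop:low_variance}, the near-disjointness of supports from Lemma~\ref{prop:mostly_single}, and the normalization $\|w^+\|^2+\|w^-\|^2+\|v\|^2=2n$ so that the $n\overline{v}^2$ contributions cancel to first order. The only difference is organizational --- you first isolate the bound $\overline{v}^2\le 1+O(\xi)$ by sandwiching $\|w^+ + w^-\|^2$ and controlling $\langle w^+,w^-\rangle$ on the overlap set $I$, whereas the paper restricts the sum $\sum_i(w_i^+ + w_i^-)^2$ to the set $S$ where the supports are disjoint and adds the two resulting lower bounds on $\|w^+\|^2+\|w^-\|^2$ so that $\overline{v}^2$ never needs to be bounded separately; your constants come out slightly worse ($4\xi + O(\xi^2)$ versus the paper's $2\sqrt 2\,\xi\overline v + O(\xi^2)$) but still clear the $25\xi$ target under the standing assumption that $\xi$ is small.
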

\begin{proof}
As in Lemma~\ref{prop:mostly_single}, let
\begin{align*}
I & = \{i: w^+_i \neq 0, w^-_i \neq 0\} \\
J & = \{j: w^+_j = 0, w^-_j = 0\}\,.
\end{align*}
Let $S = [n] \setminus (I \cup J)$.
If $i \in S$, then exactly one of $w^+_i$ and $w^-_i$ is nonzero, so $\sum_{i \in S} (w^+_i + w^-_i)^2 = \sum_{i \in S} (w^+_i)^2 + (w^-_i)^2 \leq \|w^+\|^2 + \|w^-\|^2$.

Expanding the square yields
\begin{align*}
(w_i^+ + w_i^-)^2 & = \overline{v}^2 + (w_i^+ + w_i^- - \overline{v})^2 + 2 \overline{v} (w_i^+ + w_i^- - \overline{v})\\ 
&\geq \overline{v}^2 - 2 \overline{v} \cdot |w_i^+ + w_i^-- \overline{v}|\,,
\end{align*}
whence the lower bound
\begin{align}
\|w^+\|^2 + \|w^-\|^2
& \geq \sum_{i \in S} (w_i^+ + w_i^-)^2 \nonumber \\
& \geq |S| \overline{v}^2 - 2 \overline{v} \sum_{i \in S} |w_i^+ - w_i^-- \overline{v}| \nonumber \\
& > |S| \overline{v}^2 - 4 \sqrt 2 \xi n \overline{v}\,, \label{eq:mass_bound_1}
\end{align}
where the third inequality follows from the Cauchy-Schwarz inequality and Lemma~\ref{prop:low_variance}.
On the other hand, Lemma~\ref{prop:low_variance} and the fact that $\|w^+\|^2 +  \|w^-\|^2 + \|v\|^2 = 2n$ imply
\begin{align}
\|w^+\|^2 + \|w^-\|^2 = 2n - \|v\|^2 & = 2n - n \overline{v}^2 - \sum_{i=1}^n (v_i - \overline v)^2 \nonumber \\
& > 2n - n \overline{v}^2 - 2 \xi^2 n\,. \label{eq:mass_bound_2}
\end{align}
Summing~\eqref{eq:mass_bound_1} and \eqref{eq:mass_bound_2} and using the fact that $|S| > n - 38 \xi^2 n$ by Lemma~\ref{prop:mostly_single} yields
\begin{equation*}
\|w^+\|^2 + \|w^-\|^2 \geq n - 19 \xi^2 n \overline{v}^2 - \xi^2 n - 2 \sqrt 2 \xi n \overline v\,.
\end{equation*}
Finally, we apply the fact that $n \overline{v}^2 \leq \|v\|^2 \leq 2n$ and the assumption that $\xi < \ep < 1/2$ to conclude
\begin{equation*}
\|w^+\|^2 + \|w^-\|^2 \geq n - 39 \xi^2 n - 4 \xi n > n - 25 \xi n.
\end{equation*}
\end{proof}

Finally, we show that most clauses are good.
Again, the proof follows from an easy calculation: we show that $\overline v$ is close to 1, and Lemma~\ref{prop:low_variance} implies that $w_i^+ + w_i^-$ and $v_i$ are both close to $\overline v$ for all $i \in [n]$.
\begin{prop}\label{prop:few_bad_clauses}
If $\ep^2 \leq 1/25$ and $\xi \leq 1/200$, then there are at most $1284\xi^2n$ bad clauses.
\end{prop}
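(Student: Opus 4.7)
The plan is to bound the number of bad clauses as the sum of (i) the number of clauses containing at least one bad variable, and (ii) the number of clauses whose associated coordinate $v_j$ lies outside $(5/6,7/6)$. Since the instance is $6$-bounded, quantity (i) is at most $6$ times the number of bad variables, and each piece will turn out to be $O(\xi^2 n)$ thanks to the two preceding lemmas and a Chebyshev-style counting argument.

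First I would pin down $\overline v$ in a narrow interval around $1$. Lemma~\ref{prop:low_variance} directly gives $\overline v\geq\sqrt{1-3\ep^2}$ (using that we arranged $\overline v\geq 0$). For the matching upper bound, combine $\|v\|^2=2n-\|w^+\|^2-\|w^-\|^2$ with Proposition~\ref{prop:large_top} to obtain $\|v\|^2\leq (1+25\xi)n$, so that $n\overline v^2\leq\|v\|^2$ yields $\overline v\leq\sqrt{1+25\xi}$. Under the hypotheses $\ep^2\leq 1/25$ and $\xi\leq 1/200$, these confine $\overline v$ to an interval such as $[22/25,\,17/16]$, on which the gap to $2/3$ and $4/3$ (respectively to $5/6$ and $7/6$) is bounded below by an explicit constant.

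To count bad variables, split $[n]$ into $I\cup J$ and its complement. Lemma~\ref{prop:mostly_single} gives $|I|+|J|<38\xi^2n$ directly. For $i\notin I\cup J$, exactly one of $w_i^+,w_i^-$ is zero, so $w_i^++w_i^-$ equals the nonzero coordinate and variable $i$ is bad precisely when $w_i^++w_i^-\notin(2/3,4/3)$. The interval for $\overline v$ then forces $|w_i^++w_i^--\overline v|$ to exceed an explicit constant gap (at least $16/75$ under the given hypotheses), so Chebyshev's inequality applied to $\sum_i(w_i^++w_i^--\overline v)^2<8\xi^2n$ bounds the number of such $i$ by an explicit $O(\xi^2n)$. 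Summing gives a total bound on bad variables, and multiplying by $6$ (via $6$-boundedness) yields the contribution to (i). An entirely analogous Chebyshev estimate using $\sum_i(v_i-\overline v)^2<2\xi^2n$ and the gap from $\overline v$ to $5/6$ or $7/6$ handles (ii).

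The main obstacle is purely quantitative: one must chase constants carefully so that the hypotheses $\ep^2\leq 1/25$ and $\xi\leq 1/200$ are just strong enough to keep every Chebyshev threshold gap strictly positive, and so that the resulting accounting adds up to the stated $1284\xi^2n$. Nothing is conceptually hard beyond the arithmetic, but sloppiness at any step in the chain $\overline v\approx 1\Rightarrow w_i^++w_i^-\approx\overline v\Rightarrow$ few bad variables $\Rightarrow$ few bad clauses will spoil the constant.
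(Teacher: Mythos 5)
Your proposal follows the paper's proof essentially exactly: the same decomposition of bad clauses into those containing a bad variable (controlled via $6$-boundedness) plus those with $v_j$ outside $(5/6,7/6)$, the same reliance on Lemmas~\ref{prop:low_variance} and~\ref{prop:mostly_single} and on Proposition~\ref{prop:large_top} to confine $\overline v$, and the same Chebyshev-style counting. The one place your tentative numbers diverge is the interval for $\overline v$: with $[22/25,\,17/16]$ the gap from $\overline v$ down to $5/6$ is only $7/150$, so part (ii) alone already costs roughly $918\xi^2 n$ and the total overshoots $1284\xi^2 n$; the paper instead extracts $|\overline v - 1|<1/12$ from the very same two inequalities (via $\overline v^2>22/25\Rightarrow\overline v>11/12$ and $\overline v^2\leq 9/8\Rightarrow\overline v<13/12$) and uses thresholds $|v_\ell-1|>1/6$ and $|w_k^++w_k^--1|>1/3$, giving $288\xi^2n$ for (ii) and $6\cdot(38+128)\xi^2n=996\xi^2n$ for (i), which sum to exactly $1284\xi^2 n$.
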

\begin{proof}
Recall that, by assumption, $\overline v \geq 0$.
We first show
\begin{equation}\label{eq:vbar_near_one}
|\overline v - 1| < 1/12\,.
\end{equation}

Lemma~\ref{prop:low_variance} implies $\overline{v}^2 > 1 - 3\ep^2 \geq 22/25$, so $\overline{v} > 11/12$.
By convexity, $n \overline{v}^2 \leq \|v\|^2$, and Proposition~\ref{prop:large_top} implies that $\|v\|^2 \leq (1+25\xi)n$.
Therefore $\overline{v}^2 \leq 1 + 25 \xi \leq 9/8$ and $\overline{v} < 13/12$.

We now show that most entries of $v$ and $w^+ + w^-$ are near $1$.
Let
\begin{align*}
L & = \{\ell: |v_\ell - 1| > 1/6\}\,, \\
K & = \{k: |w_k^+ + w_k^- - 1| > 1/3\}\,.
\end{align*}

If $\ell \in L$, then~\eqref{eq:vbar_near_one} implies $(v_\ell - \overline{v})^2 \geq (|v_\ell-1| - |\overline v - 1|)^2 > 1/144$.
Likewise, if $k \in K$, then $(w_k^+ + w_k^- - \overline{v})^2 \geq (|w_k^+ + w_k^- - 1| - |\overline v - 1|)^2 > 1/16$.
Summing these inequalities over $\ell \in L$ and $k \in K$, respectively, and applying Lemma~\ref{prop:low_variance} yields
\begin{align}
\label{eq:lsmall}|L| & < 144 \sum_{i=1}^n (v_i - \overline{v})^2 < 288 \xi^2 n\,, \\
\label{eq:ksmall}|K| &< 16 \sum_{i=1}^n (w_i^+ - w_i^- - \overline{v})^2 < 128 \xi^2 n\,.
\end{align}

By Lemma~\ref{prop:mostly_single} and~\eqref{eq:ksmall}, there exists a set $G \in [n]$ of size at least $(1-166\xi^2)n$ such that for all $i \in G$:
\begin{itemize}
\item Exactly one of $w^+_i$ and $w^-_i$ is nonzero, and
\item $|w^+_i + w^-_i - 1| < 1/3$.
\end{itemize}
By definition, all variables in $G$ are good.

Since $\phi$ is $6$-bounded, the other $166 \xi^2 n$ variables are contained in at most $996 \xi^2 n$ clauses.
This fact combined with~\eqref{eq:lsmall} implies that there are at most $(996 \xi^2 + 288 \xi^2)n = 1284 \xi^2 n$ bad clauses, as claimed.
\end{proof}

We are now ready to prove Theorem~\ref{thm:main} for a specific choice of $\delta$.
\begin{prop}\label{prop:main}
Set $\ep = 1/5$ and $\xi = 1/\lceil 10^{5}/\alpha\rceil$, and let
\begin{equation*}
\rho = \frac{\ep^2}{36} \left(\frac{9}{13} \alpha - 1284 \xi^2\right) - 25 \xi\,.
\end{equation*}
If $\tilde X$ is defined as in Section~\ref{sec:X_def}, then $\tilde X$ satisfies the three properties given in Section~\ref{sec:overview}, with $c_1 = 3 \xi^{-1}$, $c_2 = 1 + \xi^2$, and $c_3 = \rho$.
\end{prop}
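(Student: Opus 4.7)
The plan is to assemble Proposition~\ref{prop:main} by feeding the parameter choices $\ep = 1/5$ and $\xi = 1/\lceil 10^5/\alpha \rceil$ into the machinery developed earlier in Section~\ref{sec:verify}. Property~1 and Property~2 require essentially no additional work: Proposition~\ref{prop:property_3} gives operator norm at most $3\xi^{-1}$, so $c_1 = 3\xi^{-1}$ is immediate, and Proposition~\ref{prop:property_1} directly supplies the assignment vector witnessing Property~2 whenever $\val(\phi) = 1$. All the real work is in Property~3.

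For Property~3, I would first verify that the parameter choices satisfy the hypotheses of the two ``structural'' results about any minimizer $w$ of~\eqref{eqn:the_program}. Since $\ep = 1/5$ gives $\ep^2 = 1/25 < 1/6$, the hypothesis of Lemma~\ref{prop:mostly_single} and hence Proposition~\ref{prop:large_top} is satisfied, yielding
\begin{equation*}
\|w^+\|^2 + \|w^-\|^2 > (1 - 25\xi)n\,,
\end{equation*}
i.e.\ the first condition of Proposition~\ref{prop:approximate_assignment_implies_bound} with $\beta = 25\xi$. Next, since $\alpha \leq 1$ forces $\xi \leq 10^{-5} < 1/200$ and $\ep^2 = 1/25$, Proposition~\ref{prop:few_bad_clauses} applies, producing at most $1284 \xi^2 n$ bad clauses; thus the second condition holds with $\gamma = 1284\xi^2$.

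Having checked the structural hypotheses, the conclusion of Property~3 reduces by Proposition~\ref{prop:approximate_assignment_implies_bound} to showing $\rho > 0$, where $\rho$ is exactly the quantity defined in the statement. This is the only quantitative step and is the main obstacle, though it amounts to a short arithmetic check. I would estimate the two terms separately: the choice $\xi \leq \alpha / 10^5$ makes $1284\xi^2 \leq 1284 \alpha^2/10^{10}$, which is negligible compared to $(9/13)\alpha$, so the positive contribution $\tfrac{\ep^2}{36}((9/13)\alpha - 1284\xi^2)$ is bounded below by something of order $\alpha/1300$, whereas the subtracted term $25\xi$ is at most $25\alpha/10^5 = \alpha/4000$. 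Comparing these two bounds shows $\rho > 0$, with room to spare.

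With $\rho > 0$ verified, Proposition~\ref{prop:approximate_assignment_implies_bound} yields Property~3 with $c_2 = 1 + \xi^2$ and $c_3 = \rho$, completing the proof. The only place where the specific numerical value of $\ep$ matters is in simultaneously meeting the cutoffs $\ep^2 < 1/6$ and $\ep^2 \leq 1/25$ from the preceding lemmas while keeping the factor $\ep^2/36$ large enough that the positive term in $\rho$ dominates $25\xi$; the choice $\ep = 1/5$ is essentially the largest convenient value satisfying both constraints, and the aggressive choice $\xi = 1/\lceil 10^5/\alpha\rceil$ is dictated precisely by the need to dominate the linear-in-$\xi$ error term by the linear-in-$\alpha$ main term.
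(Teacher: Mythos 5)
Your proposal is correct and follows the same route as the paper: Properties~1 and~2 come directly from Propositions~\ref{prop:property_1} and~\ref{prop:property_3}, and Property~3 follows by checking $\ep^2 = 1/25$ and $\xi \leq 10^{-5}$ against the hypotheses of Propositions~\ref{prop:large_top} and~\ref{prop:few_bad_clauses} and then verifying $\rho > 0$ in Proposition~\ref{prop:approximate_assignment_implies_bound}. Your explicit arithmetic comparison ($\alpha/1300$ versus $\alpha/4000$) is a correct elaboration of the paper's ``plugging in the given values yields $\rho > 0$.''
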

\begin{proof}
Properties~1 and~2 have been shown to hold in Propositions~\ref{prop:property_1} and~\ref{prop:property_3}.
By Propositions~\ref{prop:large_top} and~\ref{prop:few_bad_clauses}, if $\ep^2 \leq 1/25$ and $\xi \leq 1/200$, then the conditions of Proposition~\ref{prop:approximate_assignment_implies_bound} hold with $\beta = 25\xi$ and $\gamma = 1284 \xi^2$.
Plugging in the given values of $\ep$ and $\xi$ yields that $\rho > 0$, so Property~3 holds with $c_2 = 1 + \xi^2$ and $c_3 = \rho$.
\end{proof}
Proposition~\ref{prop:main} implies that any procedure to distinguish between $\frac{1}{c_1}\tilde X \in \rip(2c_2n, 1 - \frac{1 + c_3}{2c_1^2})$ and $\frac{1}{c_1}\tilde X \notin \rip(2n, 1 - \frac{1}{2c_1^2 - c_3})$ would yield a procedure to distinguish between $\val(\phi) = 1$ and $\val(\phi) < 1 - \alpha$, so Theorem~\ref{thm:main} holds with
\begin{align*}
\delta  = 1 - \frac{1 + \rho}{18 \xi^{-2}}, \qquad
\lambda_1  = c_2, \qquad
\lambda_2  = \frac{18 \xi^{-2}}{18 \xi^{-2} - \rho}\,.
\end{align*}

\subsection{Extension to general $\delta$}\label{sec:extension}
Finally, Theorem~\ref{thm:main} follows from the following proposition.
\begin{prop}\label{prop:new_delta}
If Theorem~\ref{thm:main} holds for some $\delta \in (0, 1)$, then it holds for \emph{any} $\delta' \in (0, 1)$.
\end{prop}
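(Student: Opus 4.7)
The plan is to convert any matrix $X$ witnessing the gap hardness at parameter $\delta$ into a matrix $X'$ witnessing it at parameter $\delta'$ by the elementary stacking construction
\[
X' = \begin{pmatrix} a X \\ b I \end{pmatrix},
\]
where $a, b > 0$ are chosen so that $a^2 + b^2 = 1$ and $a^2 = \delta'/\delta$. This identity guarantees $\|X'u\|^2 = a^2\|Xu\|^2 + b^2\|u\|^2$ for every $u$, so the midpoint of $\|X'u\|^2/\|u\|^2$ remains $1$ while the deviation from that midpoint is rescaled by exactly $a^2$.

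First I would verify the yes-side of the gap. If $X \in \rip(k,\delta)$, then for any $k$-sparse $u$,
\[
\|X'u\|^2 \in \bigl[a^2(1-\delta) + b^2,\; a^2(1+\delta)+b^2\bigr]\|u\|^2 = \bigl[1-\delta',\; 1+\delta'\bigr]\|u\|^2,
\]
so $X' \in \rip(k, \delta')$. Next the no-side: if some $(k/\lambda_1)$-sparse $u$ witnesses $\|Xu\|^2 < (1-\lambda_2\delta)\|u\|^2$, then the same $u$ gives $\|X'u\|^2 < (1 - a^2\lambda_2\delta)\|u\|^2 = (1 - \lambda_2\delta')\|u\|^2$, so $X' \notin \rip(k/\lambda_1, \lambda_2\delta')$ with the same constants $\lambda_1,\lambda_2$. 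The universal upper bound $\|Xu\|^2 \le (1+\delta)\|u\|^2$ transfers identically to $\|X'u\|^2 \le (1+\delta')\|u\|^2$ by the same algebra, preserving the strengthened form of Theorem~\ref{thm:main}.

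The main obstacle is that the stacking trick only works when $\delta' \le \delta$, since otherwise $a^2 > 1$ and $b$ is imaginary. I would resolve this by exploiting the freedom in Proposition~\ref{prop:main}, whose construction produces $\delta = 1 - \Theta(\xi^2)$ and hence can be made arbitrarily close to $1$ by shrinking $\xi$. Given any target $\delta' \in (0,1)$, one first reinstantiates Proposition~\ref{prop:main} with $\xi$ small enough that the resulting $\delta$ exceeds $\delta'$, and then applies the stacking construction above. Polynomial bit complexity of the reduction is preserved by taking rational approximations of $a$ and $b$, with the negligible resulting slack absorbed into $\lambda_2$.
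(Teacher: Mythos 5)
Your treatment of the case $\delta' \le \delta$ is essentially the paper's: the paper stacks $\mu X$ on $\nu I_{p\times p}$ with $\mu^2 \approx \delta'/\delta$ and $\mu^2+\nu^2 \approx 1$ (using explicit rational approximations with a tolerance $\tau$ absorbed into $\lambda_2$), exactly as you propose. Where you diverge is the case $\delta' > \delta$. The paper handles it by a second black-box reduction, $X' = \mu X$ with $\mu \le 1$ chosen so that $\mu^2 \in [\tfrac{1-\delta'}{1-\delta}, \tfrac{1-\delta'}{1-\lambda_2\delta})$: shrinking the matrix pushes the lower-bound violation further from $1$, so $X \in \rip(k,\delta) \Rightarrow X' \in \rip(k,\delta')$ and $X \notin \rip(k/\lambda_1,\lambda_2\delta) \Rightarrow X' \notin \rip(k/\lambda_1,\lambda_2'\delta')$ with $\lambda_2' > 1$. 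You instead reopen the construction of Proposition~\ref{prop:main} and re-run it with smaller $\xi$ so that the base $\delta = 1 - (1+\rho)\xi^2/18$ exceeds $\delta'$, then stack down. This does work — $\rho$ remains positive as $\xi \to 0$, and $\lambda_1 = 1+\xi^2$, $\lambda_2 = \tfrac{18\xi^{-2}}{18\xi^{-2}-\rho}$ stay strictly above $1$ — and it suffices for the paper's real goal of establishing Theorem~\ref{thm:main} for every $\delta'$. But be aware that it does not prove the proposition as literally stated: the proposition is a conditional reduction from the abstract hypothesis ``Theorem~\ref{thm:main} holds at $\delta$,'' whereas your argument for $\delta' > \delta$ appeals to the specific construction rather than to that hypothesis. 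The paper's downscaling trick is the missing ingredient if you want the self-contained, construction-independent statement; your version buys nothing extra in generality and costs you the re-verification that the construction's $\delta$ can be driven arbitrarily close to $1$.
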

\begin{proof}
In both cases, we proceed by showing how to reduce the problem of distinguishing $X \in \rip(k, \delta)$ from $X \notin \rip(k/\lambda_1, \lambda_2 \delta)$ to the problem of distinguishing $X' \in \rip(k, \delta')$ from $X' \notin \rip(k/\lambda_1, \lambda_2' \delta')$ for some constant $\lambda_2' > 1$, where the matrix $X'$ satisfies $\|X' u\|^2 \leq (1+\delta')\|u\|^2$ for all $u$ and can be constructed from $X$ in polynomial time.

We first suppose $\delta' \in (0, \delta)$.
Given a matrix $X \in \RR^{n \times p}$, define the block matrix $X' \in \RR^{(n + p) \times p}$ by
\begin{equation*}
X' = \left(\begin{array}{c}
\mu X \\
\nu I_{p \times p}
\end{array}\right)\,,
\end{equation*}
where $I_{p \times p}$ is the identity matrix and $\mu$ and $\nu$ are rational approximations of $\sqrt{\delta'/\delta}$ and $\sqrt{1 - (\delta'/\delta)}$, respectively, such that $\mu^2 \delta \in [\delta' - 2 \tau, \delta' - \tau]$ and $\mu^2 + \nu^2 \in [1 - \tau, 1]$ for some tolerance parameter $\tau \leq \frac{(\lambda_2 - 1)\delta'}{2 + 4 \lambda_2}$.
Note that $\mu$ and $\nu$ are constants independent of the problem instance, and the matrix $X'$ can be constructed from $X$ in polynomial time.

The definition of $X'$ implies that for any $u \in \RR^p$, if $\|X u\|^2 = (1 + \theta)\|u\|^2$, then
\begin{equation*}
\|X' u\|^2 = \mu^2 \|X u\|^2 + \nu^2 \|u\|^2 = ((\mu^2 + \nu^2) + \mu^2 \theta)\|u\|^2\,.
\end{equation*}
Note that $\|X'u\|^2 \leq (1+\delta') \|u\|^2$ for all $u$.
Moreover, it is easy to check if $X \in \rip(k, \delta)$, then $X' \in \rip(k, \delta')$, and if $X \notin \rip(k/\lambda_1, \lambda_2 \delta)$, then $X' \notin \rip(k/\lambda_1, \lambda_2 \delta' - (1+ 2\lambda_2)\tau)$, in which case the definition of $\tau$ implies that $X' \notin \rip(k/\lambda_1, \lambda_2' \delta')$ for $\lambda_2' = \frac 1 2 (\lambda_2 + 1)$.
Therefore the problem of distinguishing between $X \in \rip(k, \delta)$ and $X \notin \rip(k/\lambda_1, \lambda_2 \delta)$ is reducible in polynomial time to the problem of distinguishing between $X' \in \rip(k, \delta')$ or $X' \notin \rip(k/\lambda_1, \lambda_2'\delta')$.

For $\delta' \in (\delta, 1)$, we proceed similarly.
Let $X$ be a matrix satisfying $\|Xu\|^2 \leq (1+\delta)\|u\|^2$ for all $u$.
Given such a matrix, let $X' = \mu X$, where $\mu \leq 1$ is a rational number such that $\mu^2 \in [\frac{1-\delta'}{1-\delta}, \frac{1- \delta'}{1 - \lambda_2 \delta})$\,.
As above, $X'$ can be constructed from $X$ in polynomial time.
Note that $\|X'u\|^2 \leq (1+\delta)\|u\|^2 \leq (1+ \delta')\|u\|^2$ for all $u$.

If $X \in \rip(k, \delta)$, then $X' \in \rip(k, \delta')$.
If $X \not\in \rip(k/\lambda_1, \lambda_2 \delta)$, then $X' \not\in \rip(k/\lambda_1, \lambda_2' \delta')$ for $\lambda_2 = \frac{1 - \mu^2(1 - \lambda_2 \delta)}{\delta'} > 1$.
Hence deciding whether $X \in \rip(k, \delta)$ or $X \not\in \rip(k/\lambda_1, \lambda_2 \delta)$ is reducible in polynomial time to deciding whether $X' \in \rip(k, \delta')$ or $X' \not\in \rip(k/\lambda_1, \lambda_2' \delta')$.
\end{proof}

\subsection{Extension to matrices of other shapes}\label{sec:shape}
As noted above, the matrices constructed in the proof of Theorem~\ref{thm:main} have more rows than columns.
We first show that Problem~\ref{prob:modified_no_gap} remains hard when restricted to square matrices.
\begin{prop}\label{prop:square}
Theorem~\ref{thm:main} holds when restricted to square matrices.
\end{prop}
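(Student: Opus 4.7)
The plan is to convert the rectangular $M \times p$ matrix $\tilde{X}$ produced in the main reduction (with $M = 4n + m > 3n = p$) into a $p \times p$ square matrix that preserves the gap hardness of Theorem~\ref{thm:main}. The key observation is that any matrix $A \in \RR^{p \times p}$ satisfying $A^\top A = \tilde{X}^\top \tilde{X}$ obeys $\|Au\|^2 = u^\top \tilde{X}^\top \tilde{X} u = \|\tilde{X}u\|^2$ for every $u \in \RR^p$, and thus inherits the RIP gap---including the upper bound $\|Au\|^2 \leq (1+\delta)\|u\|^2$---from $\tilde{X}$ exactly. A natural canonical choice is the upper triangular factor $R$ in the thin QR decomposition $\tilde{X} = QR$, or equivalently the symmetric positive-definite square root of the Gram matrix $\tilde{X}^\top \tilde{X}$.

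Since exact QR extraction can produce irrational entries, the reduction must use a rational approximation. Concretely, I would construct a rational matrix $X' \in \RR^{p \times p}$ with $\|{X'}^\top X' - \tilde{X}^\top \tilde{X}\|_{\text{op}} < \eta$, for a fixed constant tolerance $\eta$ chosen small enough to sit inside the constant gap $\rho$ of Proposition~\ref{prop:main}. Standard methods---Gram--Schmidt in exact rational arithmetic combined with Newton iteration for the normalizing square roots---produce such an $X'$ in time polynomial in $n$ and $\log(1/\eta)$. A direct perturbation bound then yields $\bigl|\|X'u\|^2 - \|\tilde{X}u\|^2\bigr| \leq \eta \|u\|^2$ for every $u$, so the RIP gap transfers to $X'$ with essentially unchanged constants $\lambda_1$ and $\lambda_2$, and an appeal to Proposition~\ref{prop:new_delta} recovers arbitrary $\delta \in (0,1)$.

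To apply the above, I must verify that $\tilde{X}$ has full column rank so that its QR factor is well-defined and invertible. This is immediate from the block structure~\eqref{eqn:x_def}: the identity blocks in rows $1,\ldots,2n$ force the first $2n$ columns to be linearly independent, while the coupling of the $\xi^{-1}P$ block on rows $2n+1,\ldots,3n$ with the $-\xi^{-1}I$ block on rows $3n+1,\ldots,4n$ keeps the last $n$ columns out of the span of the first $2n$. The main substantive obstacle will be the perturbation analysis relating the spectra of ${X'}^\top X'$ and $\tilde{X}^\top \tilde{X}$ to the individual quantities $\|X'u\|^2$ and $\|\tilde{X}u\|^2$, but because $\rho$ is a positive constant independent of $n$ and because Proposition~\ref{prop:property_3} gives $\|\tilde X\|_\text{op} \leq 3\xi^{-1}$ as a constant, only a constant tolerance $\eta$ is needed and no delicate error propagation is required.
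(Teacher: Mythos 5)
Your proposal is correct and matches the paper's own argument: the paper also squarifies via an (approximate, polynomial-time) QR factorization --- it bounds $\|QR - X\|_{\text{op}} \leq \tau/4$ via Householder QR and deduces $\bigl|\|Xu\|^2 - \|\hat{X}u\|^2\bigr| \leq \tau\|u\|^2$, which is the same quadratic-form perturbation you obtain from controlling $\|{X'}^\top X' - \tilde{X}^\top\tilde{X}\|_{\text{op}}$ --- then transfers the gap with a constant tolerance chosen inside the gap and invokes the second half of Proposition~\ref{prop:new_delta} to recover arbitrary $\delta$. (Your full-column-rank check is harmless but unnecessary, since the truncated QR factor and the identity $\|QRu\| = \|\hat{X}u\|$ work regardless of rank.)
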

\begin{proof}
It suffices to show that for any tolerance parameter $\tau$ and matrix $X \in \RR^{n \times p}$ with $n \geq p$  and $\|X\|_\text{op} \leq 2$ it is possible to construct in polynomial time a square matrix $\hat X \in \RR^{p \times p}$ such that
\begin{equation}\label{eqn:qr_guarantee}
\big|\|Xu\|^2 - \|\hat X u\|^2\big| \leq \tau\|u\|^2\,.
\end{equation}
First, let us see why this implies the proposition.
Fix $\delta \in (0, 1)$.
Choose a $\delta' < \delta$ and let $\lambda'_1$ and $\lambda'_2$ be the constants corresponding to $\delta'$ as in the statement of Theorem~\ref{thm:main}.
Set $\tau \leq \min\{\frac 1 4 (\lambda'_2 - 1)\delta', \delta - \delta'\}$.
Given any matrix $X$ with more rows than columns, we can construct a square matrix $\hat X$ satisfying~\eqref{eqn:qr_guarantee} in polynomial time.
Recall that we can assume that $\|Xu\|^2 \leq (1+\delta')\|u\|^2$.
If $X \in \rip(k, \delta')$, then $\hat X \in \rip(k, \delta'')$ for $\delta'' = \delta' + \tau \leq \delta$, and if $X \notin \rip(k/\lambda'_1, \lambda'_2 \delta')$, then $\hat X \notin \rip(k/\lambda_1', \lambda_2'' \delta'')$ for $\lambda_2'' = \frac{\lambda_2' \delta' - \tau}{\delta' + \tau} > 1$.
Moreover, since $\|Xu\|^2 \leq (1+\delta')\|u\|^2$, we likewise have $\|\hat Xu\|^2 \leq (1+\delta'+\tau)\|u\|^2 = (1+\delta'')\|u\|^2$.
Since it is \np-hard to distinguish between $X \in \rip(k, \delta')$ and $X \notin \rip(k/\lambda_1', \lambda_2' \delta')$, it is \np-hard to distinguish between $\hat X \in \rip(k, \delta'')$ and $\hat X \notin (k/\lambda_1', \lambda_2'' \delta'')$ for square matrices, so the claim holds for some $\delta'' \leq \delta$.
To conclude, we apply the reduction in the second half of Proposition~\ref{prop:new_delta}, which does not change the shape of the matrix, to show that if the claim holds for $\delta''$, then it holds for $\delta$.

We now show that we can indeed find matrices satisfying~\eqref{eqn:qr_guarantee} in polynomial time.
Given any matrix $X \in \RR^{n \times p}$ with $\|X\|_\text{op} \leq 2$ and desired accuracy threshold $\tau$, the Householder QR algorithm (see~\cite[Section 5.2.2]{Gol13}) computes in polynomial time an upper triangular matrix $R \in \RR^{n \times p}$ such that
\begin{equation}\label{eqn:op_norm_guarantee}
\|Q R - X\|_{\text{op}} \leq \frac{\tau}{4}
\end{equation}
for some orthogonal matrix $Q \in \RR^{n \times n}$.
Note that if $X$ has more rows than columns, then we can write
\begin{equation*}
R = \left(\begin{array}{c}
\hat X \\
0
\end{array}\right)\,,
\end{equation*}
where $\hat X \in \RR^{p \times p}$ is upper triangular and $0$ is the $(n - p) \times p$ zeroes matrix.
For any $u \in \RR^p$, we have $\|Q R u\| = \|R u\| = \|\hat X u\|$, and combining this with~\eqref{eqn:op_norm_guarantee} gives
\begin{equation*}
\big|\|Xu\|^2 - \|\hat X u\|^2\big| = \big|\|Xu\| - \|\hat X u\|\big|(\|Xu\| + \|\hat X u\|) \leq \tau \|u\|^2\,,
\end{equation*}
as desired.
\end{proof}

We conclude by noting that we can easily extend to rectangular matrices with more columns than rows by combining the square matrices produced in Proposition~\ref{prop:square} with rectangular matrices in $\rip(k, \delta)$, an observation which appears in~\cite{KoiZou14}.
We rely on the following simple lemma.
\begin{lemma}[Koiran and Zouzias {\cite{KoiZou14}}]\label{lem:block_diagonal}
Let $X$ be a block diagonal matrix
\begin{equation*}
X = \left(\begin{array}{ll}
A & 0 \\
0 & B
\end{array}\right)\,,
\end{equation*}
where $A$ and $B$ are matrices with at least $k$ columns and $0$ represents a zeroes matrix of the appropriate size.
Then $X \in \rip(k, \delta)$ iff $A, B \in \rip(k, \delta)$.
\end{lemma}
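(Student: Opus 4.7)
The plan is to establish both directions of the ``iff'' by exploiting two elementary features of a block-diagonal matrix $X$ with blocks $A$ and $B$: first, the orthogonal decomposition $\|Xw\|^2 = \|Au\|^2 + \|Bv\|^2$ whenever $w = (u, v)$ is split according to the block structure; second, the sparsity identity $\|w\|_0 = \|u\|_0 + \|v\|_0$, which holds because the supports of $u$ and $v$ lie in disjoint coordinate blocks.

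For the forward implication, I would assume $X \in \rip(k, \delta)$, fix any $k$-sparse $u$ in the domain of $A$, and apply the RIP bound of $X$ to the padded vector $(u, 0)$. This padded vector has the same Euclidean norm and the same sparsity as $u$, and the block-diagonal identity gives $\|X(u,0)\|^2 = \|Au\|^2$, from which the $(k,\delta)$-RIP bound for $A$ falls out immediately. The symmetric argument applied to $(0, v)$ handles $B$. The hypothesis that each block has at least $k$ columns is invoked here only to guarantee that nonzero $k$-sparse vectors in the domains of $A$ and $B$ actually exist, so that the RIP statements for the sub-blocks are nonvacuous.

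For the reverse implication, I would assume $A, B \in \rip(k, \delta)$, take an arbitrary $k$-sparse $w = (u, v)$, and use the sparsity identity to conclude $\|u\|_0 + \|v\|_0 \leq k$, so that each of $u$ and $v$ is individually $k$-sparse. Applying the RIP bounds for $A$ and $B$ separately and summing gives
\[
(1-\delta)(\|u\|^2 + \|v\|^2) \;\leq\; \|Au\|^2 + \|Bv\|^2 \;\leq\; (1+\delta)(\|u\|^2 + \|v\|^2),
\]
and the block-diagonal identities $\|Xw\|^2 = \|Au\|^2 + \|Bv\|^2$ and $\|w\|^2 = \|u\|^2 + \|v\|^2$ translate this directly into the required $(k,\delta)$-RIP bound for $X$.

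Neither direction presents a substantive obstacle; the whole argument is a bookkeeping exercise on top of the orthogonal splitting induced by the block structure. The only point to watch is that one must keep careful track of the sparsity decomposition across the two blocks and remember to invoke the column-count hypothesis in the forward direction so as to preclude degenerate cases, but both are triviailties rather than genuine difficulties.
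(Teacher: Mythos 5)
Your proof is correct, and it is the standard argument (the paper itself cites this lemma from Koiran--Zouzias without reproducing a proof): the orthogonal splitting $\|Xw\|^2 = \|Au\|^2 + \|Bv\|^2$ together with $\|w\|_0 = \|u\|_0 + \|v\|_0$ gives both directions exactly as you describe. No gaps.
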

Combining this lemma with a deterministic procedure for generating rectangular \rip\ matrices establishes that Problem~\ref{prob:modified_no_gap} is still hard for matrices with many more columns than rows.

\begin{prop}\label{prop:rectangular}
For all $\delta \in (0, 1)$ there exist constants $\lambda_1, \lambda_2 > 1$ such that for any constant $c > 1$, given a matrix $X \in \RR^{n \times p}$ with $p \geq c n$ and sparsity parameter $k$, it is \np-hard to decide whether $X \in \rip(k, \delta)$ or $X \notin \rip(k/\lambda_1, \lambda_2 \delta)$.
\end{prop}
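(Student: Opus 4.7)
The plan is to reduce from the square-matrix gap problem of Proposition~\ref{prop:square} via the block-diagonal combination of Lemma~\ref{lem:block_diagonal}. Fix $\delta \in (0,1)$ and let $\lambda_1, \lambda_2 > 1$ be the constants supplied by Proposition~\ref{prop:square}. Given a square instance $X \in \RR^{p \times p}$ with sparsity parameter $k$, I will construct in polynomial time a rectangular matrix $X' \in \RR^{(p+a) \times (p+b)}$ with $p + b \geq c(p + a)$ whose RIP status at both parameter settings $(k, \delta)$ and $(k/\lambda_1, \lambda_2 \delta)$ matches that of $X$; this immediately yields the desired reduction.

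The construction is
\begin{equation*}
X' = \begin{pmatrix} X & 0 \\ 0 & B \end{pmatrix},
\end{equation*}
where $B \in \RR^{a \times b}$ is a deterministically-constructed $(k, \delta)$-RIP matrix whose dimensions satisfy $b \geq (c-1)p + ca$. The latter condition ensures $(p + b)/(p + a) \geq c$, as required. Since $B \in \rip(k, \delta)$ implies the strictly weaker condition $B \in \rip(k/\lambda_1, \lambda_2 \delta)$, Lemma~\ref{lem:block_diagonal} yields the equivalences
\begin{align*}
X' \in \rip(k, \delta) &\iff X \in \rip(k, \delta), \\
X' \in \rip(k/\lambda_1, \lambda_2 \delta) &\iff X \in \rip(k/\lambda_1, \lambda_2 \delta).
\end{align*}
Consequently, any algorithm distinguishing the two alternatives for $X'$ also distinguishes them for $X$, so the NP-hardness of the rectangular gap problem follows from Proposition~\ref{prop:square}.

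The only nontrivial step is producing the auxiliary block $B$ with the prescribed parameters in polynomial time. Here I invoke a deterministic polynomial-time construction of rectangular RIP matrices such as that of DeVore~\cite{Dev07}, which for any fixed $\delta$ and polynomially-bounded $k$ produces $(k, \delta)$-RIP matrices whose column-to-row aspect ratio can be tuned to any desired polynomial. By choosing the construction parameters appropriately (and, if needed, taking multiple independent block-diagonal copies of a single construction, which preserves RIP by Lemma~\ref{lem:block_diagonal}), I can guarantee $b \geq (c-1)p + ca$ while keeping both $a$ and $b$ polynomial in the size of $X$. The entire reduction is then polynomial-time, completing the argument.
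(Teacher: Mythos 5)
Your proposal is correct and follows essentially the same route as the paper: reduce from the square case of Proposition~\ref{prop:square} by forming a block-diagonal matrix with a deterministically constructed rectangular \rip{} block and invoking Lemma~\ref{lem:block_diagonal}. The only difference is the choice of deterministic construction (the paper uses~\cite{BouDilFor11} to get $B \in \RR^{r^2 \times r^{2+\ep}}$ with $B \in \rip(r,\delta)$, while you cite~\cite{Dev07}), which is immaterial here since the required sparsity $k = \Theta(p)$ is on the order of the square root of the number of rows of the auxiliary block, a regime DeVore's construction also handles.
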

\begin{proof}
We reduce from the square case of Proposition~\ref{prop:square}.
Fix $\delta \in (0, 1)$.
There exists a constant $\ep$ such that, for $r$ sufficiently large, the deterministic construction of~\cite{BouDilFor11} can in polynomial time generate a matrix $B \in \RR^{r^2 \times r^{2 + \ep}}$ such that $B \in \rip(r, \delta)$.
Given a square matrix $A \in \RR^{r \times r}$, construct the block matrix $X \in \RR^{(r + r^2) \times (r + r^{2 + \ep})}$ from $A$ and $B$ as in Lemma~\ref{lem:block_diagonal}.
For any $k \leq r$, we have $B \in \rip(r, \delta) \subseteq \rip(k, \delta)$.
Therefore, if $A \in \rip(k, \delta)$, then $X \in \rip(k, \delta)$, and if $A \notin \rip(k/\lambda_1, \lambda_2 \delta)$ then $X \notin \rip(k/\lambda_1, \lambda_2 \delta)$.
By restricting our attention to $r$ large enough that $r^\ep \geq 2c$, we obtain the claim.
\end{proof}

\section{Conclusion}
In this work, we show that it is \np-hard to certify the Restricted Isometry Property, even approximately, for all $\delta \in (0, 1)$.
This resolves a question implicit in earlier work, which either required $\delta = 1 - o(1)$ or relied on stronger assumptions than $\textsc{p} \neq \np$.
Our proof proceeded via a reduction from $6$-bounded max positive 1-in-3 \sat, whose hardness we established in Proposition~\ref{prop:max_bounded_monotone_hardness}.
While similar harness results exist elsewhere in the literature, this bounded variant may be of use in other reductions.

We note that we have made no attempt to optimize the constants in the proof of Theorem~\ref{thm:main}, but even a more careful version of this proof will still produce $\lambda_1$ and $\lambda_2$ very close to $1$.
It is an open question whether \np-hardness can be proven for a version of Problem~\ref{prob:modified_no_gap} in which the constants $\lambda_1$ and $\lambda_2$ are large.

The most important open question in this area is to establish how difficult certifying $X \in \rip(k, \delta)$ is \emph{on average} when $X$ is drawn from some natural probability distribution.
As noted above, the only work we know of to focus on this question is~\cite{WanBerPla16}.
Extending their results to the full range of parameters considered in practice would be an important theoretical result.

\appendix
\section{Proof of Proposition~\ref{prop:max_bounded_monotone_hardness}}
We reduce from a problem called max 3\sat-5.
An instance of max 3\sat-5 is a \textsc{cnf} formula where each clause contains exactly 3 variables (in positive or negative form) and each variable appears in exactly 5 clauses.
The max 3\sat-5 problem is known to be gap hard:

\begin{prop}[Feige~{\cite{Fei98}}]
There exists a constant $\alpha'$ such that, given an instance $\psi$ of max 3\sat-5, it is \np-hard to distinguish between $\val(\psi) = 1$ and $\val(\psi) < (1-\alpha')$.
\end{prop}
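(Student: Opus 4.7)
The plan is to derive this proposition as a standard consequence of the \textsc{pcp} theorem combined with a bounded-occurrence reduction of Papadimitriou--Yannakakis type. As background, the \textsc{pcp} theorem gives a constant $\ep_0 > 0$ such that it is \np-hard to distinguish satisfiable 3-\sat\ formulas from those in which only a $(1-\ep_0)$-fraction of clauses can be simultaneously satisfied. I would take this ``gap 3-\sat'' statement as the starting point; the goal is then to massage it into the structured form max 3\sat-5 (exactly 3 literals per clause, each variable appearing in exactly 5 clauses) while preserving a constant gap.

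First I would normalize the clause structure: any clause with fewer than 3 literals can be padded using fresh auxiliary variables to have exactly 3 distinct literals, and clauses with more than 3 literals do not arise in 3-\sat. Since each original clause is replaced by a constant number of new clauses in a ``completeness-preserving, soundness-preserving'' way, this multiplies the gap by at most a constant and leaves it positive. At this point we have gap-hardness of 3-\sat\ with exactly 3 literals per clause.

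Next I would enforce the occurrence bound using the standard expander replacement trick. For a variable $x$ appearing $t$ times, introduce $t$ copies $x_1,\dots,x_t$ (one per occurrence) and add equality constraints $x_i \leftrightarrow x_j$ for every edge $(i,j)$ of a constant-degree expander graph $G_t$ on the $t$ copies. Each equality constraint is written as the two 3-literal clauses $(x_i \vee \bar x_j \vee z)$ and $(\bar x_i \vee x_j \vee z)$ (padded with a fresh dummy). The key quantitative input is that in a constant-degree expander, any partition of the vertices into ``true-copies'' vs. ``false-copies'' either uses a trivial cut or cuts a constant fraction of edges, by edge expansion / the spectral gap. Hence any assignment that is not consistent on the copies of $x$ violates a constant fraction of the equality clauses, which preserves a constant gap. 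After this step every variable appears only a constant number $d$ of times. Finally, one trims or pads with trivially-satisfiable dummy clauses so that every variable appears in exactly 5 clauses; constants can be engineered by choosing the expander degree and the number of padding clauses appropriately.

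The main obstacle is the expander replacement step: one must check that the constant gap $\ep_0$ from the \textsc{pcp} theorem survives after introducing the equality gadgets, with an explicit constant $\alpha'$. This requires choosing an explicit constant-degree expander family (e.g.\ Margulis-style or Ramanujan graphs) and balancing three competing quantities: the original gap $\ep_0$, the spectral/edge-expansion constant of the expander, and the blow-up in instance size introduced by the replacement. Optimizing these parameters is delicate but well-known; the output is a universal constant $\alpha' > 0$ depending only on $\ep_0$ and the expander, as required. (Feige's original proof in \cite{Fei98} takes a somewhat different route through his ``two-prover'' verifier construction, which yields better constants, but the expander-replacement route above is sufficient to establish the qualitative claim.)
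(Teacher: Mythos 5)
A preliminary remark on the comparison: the paper gives no proof of this proposition at all --- it is quoted as a black box from Feige~\cite{Fei98} --- so your sketch is really being measured against the standard literature rather than against an argument in the text. Your general route (the \textsc{pcp} theorem, i.e.\ gap 3-\sat, followed by a Papadimitriou--Yannakakis expander replacement to bound occurrences) is indeed the right family of techniques, but as written it has two genuine gaps.

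First, your padding gadgets are unsound. Padding a short clause, or the equality clauses $(x_i \vee \bar x_j \vee z)$ and $(\bar x_i \vee x_j \vee z)$, with a fresh dummy literal $z$ occurring nowhere else lets every assignment satisfy these clauses for free by setting $z$ true; the expander edges then impose no penalty on copies of a variable that are assigned inconsistently, so the soundness analysis collapses, and likewise padding $1$- and $2$-clauses with free dummies can turn an unsatisfiable formula into a satisfiable one. The standard fix is to pad with \emph{both} polarities of a fresh variable, e.g.\ replace the $2$-clause $(x_i \vee \bar x_j)$ by $(x_i \vee \bar x_j \vee z)\wedge(x_i \vee \bar x_j \vee \bar z)$, so that a violated $2$-clause forces at least one unsatisfied $3$-clause. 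Second, the closing step ``trim or pad with trivially-satisfiable dummy clauses so that every variable appears in exactly 5 clauses'' is exactly where the content of Feige's statement lies, and it does not follow from your construction: with a degree-$d$ expander and two $3$-clauses per edge, each copy of a variable occurs in $1+2d \geq 7$ clauses (degree-$2$ graphs are cycles, not expanders), adding dummy clauses changes the clause--variable ratio and dilutes the gap, and one must simultaneously keep exactly $3$ distinct variables per clause. Regularizing to the exact $(3,5)$-structure while preserving a constant gap requires a genuine construction --- this is what Feige's two-prover argument (or a careful regularization argument) supplies --- and the exact ``5'' is not cosmetic here, since the reduction in Proposition~\ref{prop:max_bounded_monotone_hardness} uses $m = 5n/3$ to obtain instances with exactly $9/13$ as many clauses as variables.
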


Given an instance $\psi$ of max 3\sat-5 with $n$ variables and $m = \frac{5n}{3}$ clauses, we produce an instance $\phi$ of $6$-bounded max positive 1-in-3 \sat\ with $n' = 2n+4m$ variables and $m' = 3m + n = \frac{9}{13} n'$ clauses such that:
\begin{itemize}
\item $\val(\psi) = 1 \implies \val(\phi) = 1$,
\item $\val(\psi) < (1-\alpha') \implies \val(\phi) < (1-\alpha)$, where $\alpha = \alpha' / 18 $.
\end{itemize}
The claimed \np-hardness of distinguishing $\val(\phi) = 1$ and $\val(\phi) < (1-\alpha)$ then follows.

\begin{proof}[Proof of Proposition~\ref{prop:max_bounded_monotone_hardness}]
We will produce the instance $\phi$ from $\psi$ in several stages, first by transforming $\psi$ into an instance $\psi'$ of 1-in-3 \sat\ that contains negated variables, and then transforming $\psi'$ into an instance $\phi$ of \emph{positive} 1-in-3 \sat.

We first produce an instance of 1-in-3 \sat\ that contains negated variables.
Consider a clause $(a \vee b \vee c) \in \psi$, where $a, b, c$ represent arbitrary literals (positive or negative variables).
Replace this clause by the three 1-in-3 \sat\ clauses:
\begin{align*}
\exac(a, z_1, z_2),\, \exac(\bar b, z_1, z_3),\, \exac(\bar c, z_2, z_4)\,,
\end{align*}
where $\exac$ denotes the ``exactly one'' predicate as in Definition~\ref{def:e1}, $\bar x$ denotes the negated version of the literal $x$, and $z_1, \dots, z_4$ are four fresh variables appearing in these three clauses and no others.
If $(a \vee b \vee c)$ is not satisfied, then at least one of the new clauses is unsatisfied.
Indeed, in that case $\bar b$ and $\bar c$ are true, so at least one of $\exac(\bar b, z_1, z_3)$ and $\exac(\bar c, z_2, z_4)$ is unsatisfied unless $z_1, \dots, z_4$ are all false, in which case $\exac(a, z_1, z_2)$ is unsatisfied.
On the other hand, if $(a \vee b \vee c)$ is satisfied, then it is easy to check that there is a setting of $z_1, \dots, z_4$ to satisfy all three of the new clauses.
Repeating this replacement for each clause in $\psi$ yields $\psi'$

To obtain a positive instance, replace each occurrence of $x_i$ or $\overline{x_i}$ by the new variable $w_i$ or $y_i$, respectively, and add the clause
\begin{equation*}
\exac(w_i, y_i)\,.
\end{equation*}
Call the resulting positive 1-in-3 \sat\ instance $\phi$.
Note that $\phi$ has $m' = 3m + n$ clauses, and that each variable appears in at most $6$ clauses.
The instance $\phi$ involves $n' = 2n + 4m$ variables, of which the $2n$ variables $w_1, \dots w_n, y_1, \dots y_n$ correspond to positive and negative versions of $\{x_1, \dots, x_n\}$ in $\psi$.

If $\psi$ is satisfiable, then clearly $\phi$ is as well.
We now prove the contrapositive of the second claim: if $\val(\phi) \geq 1- \alpha$, then $\val(\psi) \geq 1- \alpha'$.
Suppose an assignment to the variables in $\phi$ leaves at most $\alpha m'$ clauses unsatisfied.
We use this assignment to obtain an assignment to the variables in $\psi$ in the following way: if only one of the variables $w_i$ or $y_i$ is true, then let $x_i$ be true if $w_i$ is true and let $x_i$ be false if $y_i$ is true.
If $w_i$ and $y_i$ are both true or both false, then $\phi$ does not determine a value of $x_i$, so we set it to false arbitrarily.

The clauses in $\phi$ are of two types: those that correspond to clauses in $\psi$ and those of the form $\exac(w_i, y_i)$ corresponding to variables in $\psi$.
Each unsatisfied clause of the first type corresponds to at most one unsatisfied clause of $\psi$, and each unsatisfied clause of the form $\exac(w_i, y_i)$ corresponds to at most five unsatisfied clauses of $\psi$, since the variable $x_i$ appears in five clauses of $\psi$.
In either case, an unsatisfied clause in $\phi$ induces at most five unsatisfied clauses in $\psi$.
Since the assignment to the variables of $\phi$ satisfied all but at most $\alpha m'$ clauses, the corresponding assignment to $\psi$ has at most $5 \alpha m'$ unsatisfied clauses.

Hence
\begin{equation*}
\val(\psi) \geq \frac{m - 5 \alpha m'}{m} = 1 - 18 \alpha = 1- \alpha'\,,
\end{equation*}
as desired.
\end{proof}

\section{Proofs of Lemmas~\ref{prop:low_variance} and~\ref{prop:mostly_single}}
\begingroup
\def\thelemma{\ref{prop:low_variance}}
\begin{lemma}
Let $\overline{v} = \frac 1 n \1^\top v$.
The following bounds hold:
\begin{align}
\label{eq:low_variance_v}\sum_{i=1}^n (v_i - \overline{v})^2 & < 2 \xi^2 n\,, \\
\label{eq:low_variance_w}\sum_{i=1}^n (w^+_i + w^-_i - \overline v)^2 & < 8 \xi^2 n\,, \\
\label{eq:vbar_large}\overline{v}^2 & > 1 - 3\ep^2\,.
\end{align}
\end{lemma}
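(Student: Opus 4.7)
The plan is to exploit the minimality of $w$ in~\eqref{eqn:the_program} together with the block structure of $\tilde X$. Reading off $\tilde X w$ block by block from~\eqref{eqn:x_def} gives the identity
\[
\|\tilde X w\|^2 = \|w^+\|^2 + \|w^-\|^2 + \xi^{-2}\|Pv\|^2 + \xi^{-2}\|w^+ + w^- - v\|^2 + \ep^2 \|\Phi w^+ - I' v\|^2.
\]
To turn this into useful bounds I need an upper bound on the left-hand side. I obtain one by testing the program against the \emph{all-false} assignment vector $u^\ast$ defined by $u^\ast_i = 0$, $u^\ast_{i+n} = 1$ for $i \in [n]$, and $u^\ast_j = 1$ for $2n < j \leq 3n$. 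This vector is feasible because $\|u^\ast\|^2 = \|u^\ast\|_0 = 2n \leq 2(1+\xi^2)n$, and since it sets no variable to true it trivially avoids clauses with three true variables, so Proposition~\ref{prop:assignment_value} applies with $\alpha = 1$ and with its lower bound attained, yielding $\|\tilde X u^\ast\|^2 = n + \ep^2 m$. Minimality then gives $\|\tilde X w\|^2 \leq n + \ep^2 m$.

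From here the two variance inequalities~\eqref{eq:low_variance_v} and~\eqref{eq:low_variance_w} fall out by dropping all but one summand of the expansion. Since every other summand is nonnegative, the third summand forces $\xi^{-2}\|Pv\|^2 \leq n + \ep^2 m$, so $\|Pv\|^2 \leq \xi^2(n + \ep^2 m)$; because $P$ is the orthogonal projection onto $\1^\perp$ this equals $\sum_i (v_i - \overline v)^2$, and the right-hand side is strictly less than $2\xi^2 n$ because $m/n = 9/13$ and $\ep < 1$. The same reasoning applied to the fourth summand yields $\|w^+ + w^- - v\|^2 < 2\xi^2 n$. I then write $w^+_i + w^-_i - \overline v = (w^+_i + w^-_i - v_i) + (v_i - \overline v)$ and apply the elementary inequality $(a+b)^2 \leq 2a^2 + 2b^2$ term by term before summing, which delivers~\eqref{eq:low_variance_w}.

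For the final bound~\eqref{eq:vbar_large}, I use the first two summands of the expansion together with the norm constraint. Since $\|w^+\|^2 + \|w^-\|^2 \leq \|\tilde X w\|^2 \leq n + \ep^2 m$ and $\|w^+\|^2 + \|w^-\|^2 + \|v\|^2 = \|w\|^2 = 2n$, I obtain $\|v\|^2 \geq n - \ep^2 m$; combining this with the Pythagorean identity $\|v\|^2 = n \overline v^2 + \|Pv\|^2$ and the bound from~\eqref{eq:low_variance_v} yields $n \overline v^2 \geq n - \ep^2 m - 2\xi^2 n$. Plugging in $m/n = 9/13$ gives $\overline v^2 \geq 1 - (9/13)\ep^2 - 2\xi^2$, which comfortably exceeds $1 - 3\ep^2$ under the standing assumption $\xi \ll \ep$.

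The only step requiring any thought is the choice of test vector: one might be tempted to use a satisfying assignment (which would give the tighter bound $\|\tilde X w\|^2 \leq n$), but since we are in the regime where $\phi$ need not be satisfiable no such assignment is guaranteed. The observation is that an $O(n)$ upper bound on $\|\tilde X w\|^2$ already pins down all three quantities, and the all-false assignment provides such a bound unconditionally; everything else is routine block-by-block bookkeeping.
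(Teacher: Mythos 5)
Your proposal is correct and follows essentially the same route as the paper: test the program against the all-false assignment vector via Proposition~\ref{prop:assignment_value} to get $\|\tilde X w\|^2 \leq n + \ep^2 m < 2n$, then bound each relevant block of $\tilde X w$ (the $\xi^{-1}P$ block, the $\xi^{-1}(I\,|\,I\,|\,-I)$ block, and the top identity blocks) by this quantity, using $(a+b)^2 \le 2a^2+2b^2$ for~\eqref{eq:low_variance_w} and the Pythagorean identity $\|v\|^2 = n\overline v^2 + \|Pv\|^2$ for~\eqref{eq:vbar_large}. The only differences are cosmetic (you write the full block-sum identity explicitly and use $m = \tfrac{9}{13}n$ where the paper uses $m \le n$), and your signs in the fourth block are in fact the correct ones.
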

\endgroup
\begin{proof}
We first show a simple upper bound on $\|\tilde X w\|^2$.
Let $u = (0, \dots, 0, 1, \dots, 1)$ be the assignment vector corresponding to the assignment that sets each variable to false.
By Proposition~\ref{prop:assignment_value}\,,
\begin{equation*}
\|\tilde Xu\|^2 = n + \ep^2 m \leq (1+\ep^2)n\,.
\end{equation*}
Since $w$ satisfies~\eqref{eqn:the_program},
\begin{equation}\label{eq:easy_bound}
\|\tilde Xw\| \leq \|\tilde Xu\|^2 \leq (1+\ep^2)n < 2 n\,,
\end{equation}
where we have used the assumption that $\ep < 1$.

By definition,
\begin{equation*}
\|P v\|^2 = \|(I - \frac 1 n \1 \1^\top) v\|^2 = \sum_{i=1}^n (v_i - \overline{v})^2\,.
\end{equation*}
By~\eqref{eq:easy_bound},
\begin{equation*}
\|\xi^{-1} P v\|^2 \leq \|\tilde X w\|^2 < 2n\,,
\end{equation*}
hence
\begin{equation*}
\sum_{i=1}^n (v_i - \overline{v})^2 < 2 \xi^2 n\,,
\end{equation*}
as claimed.

For the second bound, by Young's inequality,
\begin{equation*}
\sum_{i=1}^n (w^+_i + w^-_i - \overline v)^2 \leq 2 \sum_{i=1}^n (w^+_i - w^-_i - v_i)^2 + 2 \sum_{i=1}^n (v_i - \overline v)^2\,.
\end{equation*}
Note that
\begin{align*}
\xi^{-2} \sum_{i=1}^n (w^+_i - w^-_i - v_i)^2 & = \|\xi^{-1}I w^+ + \xi^{-1} I w^- - \xi^{-1}I v\|^2 \\
&\leq \|\tilde X w\|^2\,.
\end{align*}
By~\eqref{eq:easy_bound}, this quantity is smaller than $2n$.
Combining this with~\eqref{eq:low_variance_v} yields
\begin{equation*}
2 \sum_{i=1}^n (w^+_i - w^-_i - v_i)^2 + 2 \sum_{i=1}^n (v_i - \overline v)^2 < 8 \xi^2 n\,,
\end{equation*}
and~\eqref{eq:low_variance_w} follows.

For the third inequality, by~\eqref{eq:easy_bound},
\begin{equation*}
\|w^+\|^2 + \|w^-\|^2 \leq \|\tilde X w\|^2 \leq (1+\ep^2)n\,.
\end{equation*}
Therefore
\begin{equation*}
\|v\|^2 = 2n - (\|w^+\|^2 + \|w^-\|^2) \geq (1-\ep^2)n\,.
\end{equation*}
By~\eqref{eq:low_variance_v}\,,
\begin{equation*}
\|v\|^2 = \sum_{i=1}^n (v_i - \overline{v})^2 + n \overline{v}^2 < 2 \xi^2 n +  n \overline{v}^2\,.
\end{equation*}
We obtain
\begin{equation*}
n\overline{v}^2 > \|v\|^2 - 2 \xi^2 n > (1-\ep^2)n - 2\ep^2 n = (1-3 \ep^2)n\,,
\end{equation*}
and the claim follows.
\end{proof}
\begingroup
\def\thelemma{\ref{prop:mostly_single}}
\begin{lemma}
Let
\begin{align*}
I & = \{i: w^+_i \neq 0, w^-_i \neq 0\} \\
J & = \{j: w^+_j = 0, w^-_j = 0\}\,.
\end{align*}
If $\ep^2 < 1/6$, then
\begin{equation*}
|I| + |J| < 38 \xi^2 n\,.
\end{equation*}
\end{lemma}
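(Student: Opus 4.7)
The plan is to combine three ingredients: the sparsity constraint $\|w\|_0 \leq 2(1+\xi^2)n$, the two ``small variance'' bounds from Lemma~\ref{prop:low_variance}, and an explicit upper bound on $\|\tilde X w\|^2$ that follows from $w$ being a minimizer (namely, plugging in the all-false assignment vector via Proposition~\ref{prop:assignment_value} gives $\|\tilde X w\|^2 \leq (1+\ep^2)n < 2n$).

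Partition $[n]$ into four disjoint sets: $I$ and $J$ as in the statement, together with $A = \{i : w^+_i \neq 0, w^-_i = 0\}$ and $B = \{i : w^+_i = 0, w^-_i \neq 0\}$, so that $|A|+|B|+|I|+|J| = n$. The support of $w$ among its first $2n$ coordinates is exactly $|A|+|B|+2|I| = n + |I| - |J|$. For the last $n$ coordinates, note that any $i$ with $v_i = 0$ contributes $\overline v^2$ to $\sum_i(v_i - \overline v)^2$. By \eqref{eq:low_variance_v} and the hypothesis $\ep^2 < 1/6$ (so $\overline v^2 > 1-3\ep^2 > 1/2$ by \eqref{eq:vbar_large}), Markov's inequality gives $\|v\|_0 > n - 4\xi^2 n$. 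Adding these two contributions and invoking the sparsity constraint yields
\[
(n + |I| - |J|) + (n - 4\xi^2 n) < \|w\|_0 \leq 2(1+\xi^2)n,
\]
so that $|I| - |J| \leq 6\xi^2 n$. This handles $|I|$ in terms of $|J|$.

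To bound $|J|$, I would use the fourth block of $\tilde X$. Since $w$ is a minimizer and the all-false assignment vector certifies $\|\tilde X w\|^2 < 2n$, the contribution of block four gives $\sum_i (w^+_i + w^-_i - v_i)^2 < 2\xi^2 n$. For $j \in J$ this summand equals $v_j^2$. Writing $v_j = \overline v + (v_j - \overline v)$ and using the elementary inequality $(a+b)^2 \geq \tfrac12 a^2 - b^2$, we get
\[
\sum_{j \in J} v_j^2 \geq \tfrac12 |J|\,\overline v^2 - \sum_{j \in J}(v_j - \overline v)^2 \geq \tfrac12 |J|\,\overline v^2 - 2\xi^2 n,
\]
by \eqref{eq:low_variance_v}. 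Combining with the upper bound $\sum_{j \in J} v_j^2 < 2\xi^2 n$ and $\overline v^2 > 1/2$ gives $|J| < 16\xi^2 n$. Substituting into $|I| \leq |J| + 6\xi^2 n$ yields $|I| + |J| < 38\xi^2 n$, as claimed.

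The step that requires the most care is the bound on $|J|$: the other two pieces are essentially bookkeeping, but converting the fourth-block residual $\|\xi^{-1}(w^+ + w^- - v)\|^2 \leq \|\tilde X w\|^2 < 2n$ into a count on $J$ requires knowing that $\overline v$ is bounded away from $0$, which is precisely what the third inequality of Lemma~\ref{prop:low_variance} provides, and is also the only place the hypothesis $\ep^2 < 1/6$ is used.
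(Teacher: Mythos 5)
Your proof is correct and follows essentially the same route as the paper's: the same support-counting identity $\|w^+\|_0+\|w^-\|_0 = n+|I|-|J|$ combined with $\|v\|_0 > (1-4\xi^2)n$ and the sparsity budget to control $|I|-|J|$, and the same $\overline v^2 > 1/2$ argument to get $|J| < 16\xi^2 n$. The only (immaterial) difference is that you derive the $|J|$ bound directly from the fourth-block residual together with \eqref{eq:low_variance_v}, whereas the paper quotes \eqref{eq:low_variance_w} — which is itself obtained from exactly those two ingredients — so the constants come out identical.
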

\endgroup
\begin{proof}
If $w^+_j$ and $w^-_j$ are both $0$ for some $j$, then
\begin{equation*}
(w^+_j + w^-_j - \overline{v})^2 = \overline{v}^2 > (1-3\ep^2) > 1/2
\end{equation*}
by Lemma~\ref{prop:low_variance}.
Summing both sides of the above inequality over $j \in J$ yields
\begin{equation*}
|J|/2 < \sum_{j \in J} (w^+_j + w^-_j - \overline{v})^2 \leq \sum_{j = 1}^n (w^+_j + w^-_j - \overline{v})^2\,.
\end{equation*}
Applying~\eqref{eq:low_variance_w} yields
\begin{equation*}
|J| < 16 \xi^2 n\,.
\end{equation*}

We now show that $v$ has almost full support.
If $v_i = 0$, then by Lemma~\ref{prop:low_variance},
$(v_i - \overline{v})^2 = \overline{v}^2 > (1 - 3 \ep^2) > 1/2$.
If $p$ is the number of zero entries in $v$, then summing this inequality yields
\begin{equation*}
p/2 < \sum_{i: v_i = 0} (v_i - \overline v)^2 \leq \sum_{i = 1}^n (v_i - \overline v)^2\,.
\end{equation*}
Applying~\eqref{eq:low_variance_v} then implies $p \leq 4 \xi^2 n$, so
\begin{equation*}
\|v\|_0 = n - p > (1 - 4\xi^2)n\,.
\end{equation*}
Since $\|w\|_0 \leq 2 c_2 n = 2n + 2 \xi^2 n$, we have
\begin{equation*}
\|w^+\|_0 + \|w^-\|_0 = \|w\|_0 - \|v\|_0 < (1 + 6 \xi^2)n\,.
\end{equation*}
We obtain
\begin{align*}
|I| &= \|w^+\|_0 + \|w^-\|_0 + |J| - n \\ &< (1+ 6 \xi^2)n + 16 \xi^2 n - n = 22 \xi^2 n\,.
\end{align*}
Combining the above bounds on $|I|$ and $|J|$ yields the claim.
\end{proof}
\bibliographystyle{plain}
\bibliography{rip}

\end{document}